\def\hatK{\hat{K}}
\def\boldX{\boldsymbol{X}}
\def\boldU{\boldsymbol{U}}
\def\boldY{\boldsymbol{Y}}
\def\boldV{\boldsymbol{V}}
\def\boldK{\boldsymbol{K}}
\def\ulineS{\underline{S}}
\def\ulineX{\underline{X}}
\def\ulineU{\underline{U}}
\def\ulineW{\underline{W}}
\def\ulineCalS{\underline{\mathcal{S}}}
\def\ulineCalX{\underline{\mathcal{X}}}
\def\ulineCalY{\underline{\mathcal{Y}}}
\def\ulineK{\underline{K}}
\def\ulineY{\underline{Y}}
\def\ulineX{\underline{X}}
\def\ulineY{\underline{Y}}
\def\3To1BC{$3-$to$-1$}
\def\define{:{=}~}
\def\naturals{\mathbb{N}}
\newcommand{\msout}[1]{\text{\sout{\ensuremath{#1}}}}
\newif\ifProofForORDBC
\def\inpsetX{\mathcal{X}_{1}}
\def\inpsetY{\mathcal{X}_{2}}
\def\InpX{{X}_{1}}
\def\InpY{{X}_{2}}
\def\Out{{Y}}
\def\mtimesl{m\times l}
\def\ShrdChnl{\mathbb{W}_{Y_{0}|\underline{U}}}
\newcommand\independent{\protect\mathpalette{\protect\independenT}{\perp}}
\def\independenT#1#2{\mathrel{\rlap{$#1#2$}\mkern2mu{#1#2}}}
\newif\ifJournal
\newcommand{\comment}[1]{}
\begin{document}
\sloppy
\newtheorem{remark}{\it Remark}
\newtheorem{thm}{Theorem}
\newtheorem{corollary}{Corollary}
\newtheorem{definition}{Definition}
\newtheorem{lemma}{Lemma}
\newtheorem{example}{Example}
\newtheorem{prop}{Proposition}

\title{Communicating Correlated Sources over an Interference Channel}

\author{\IEEEauthorblockN{Arun Padakandla}
\thanks{This work was supported by the Center for Science of Information (CSoI), an NSF Science and Technology Center, under grant agreement CCF-0939370.}
}
\maketitle
\vspace{-0.5in}
\begin{abstract}
A new coding technique, based on \textit{fixed block-length} codes, is proposed for the problem of communicating a pair of correlated sources over a $2-$user interference channel. Its performance is analyzed to derive a new set of sufficient conditions. The latter is proven to be strictly less binding than the current known best, which is due to Liu and Chen \cite{201112TIT_LiuChe-Shrt}. Our findings are inspired by Dueck's example \cite{198103TIT_Due-Shrt}.
\end{abstract}
\section{Introduction}
\label{Sec:Introduction}
Network information theory has provided us with elegant techniques to exploit correlation amongst distributed information sources. Such a correlation is handled at two levels. Probabilistic (soft) correlation is exploited via binning or transferring them via test channels \cite{198011TIT_CovGamSal-Shrt}. When the sources possess common bits - G\'acs-K\"orner-Witsenhausen (GKW) common part -, conditional coding provides enhanced benefits. In this article, we propose a new coding technique to exploit the presence of \textit{near GKW parts}, amongst distributed sources.

Our primary focus is the scenario depicted in Fig. \ref{Fig:GeneralProblem}. A pair $S_{1},S_{2}$ of correlated sources have to be communicated over a $2-$user interference channel (IC). Receiver (Rx) $j$ wishes to reconstruct $S_{j}$ losslessly. We undertake a Shannon-theoretic study and restrict attention to characterizing sufficient conditions under which $S_{j}$ can be reconstructed at Rx $j$.

The current known best set of sufficient conditions (LC conditions) for this problem is due to Liu and Chen \cite[Thm. 1]{201112TIT_LiuChe-Shrt} and are proven to be optimal for a class of deterministic ICs \cite[Thm. 2]{201112TIT_LiuChe-Shrt}. In this article, we propose a new coding technique based on \textit{fixed block-length} (B-L) codes and derive a new set of sufficient conditions. Through an example (Ex. \ref{Ex:DueckExampleSlightlyModified}), we prove (Lem. \ref{Lem:StrictWeakerConditionsThanLC}) the latter conditions are strictly less binding.

Presence of GKW part enables encoders co-ordinate their inputs, and thereby eliminate interference for the corresponding component of the channel input. Moreover, GKW part enables co-ordination even while enjoying separation. In other words, one can design a channel code corresponding to an optimizing input pmf, unconstrained by the source pmf. 
If $S_{1},S_{2}$ do \textit{not} possess a GKW part, a single-letter (S-L) technique is constrained by the S-L long Markov chain (LMC) $X_{1}-S_{1}-S_{2}-X_{2}$. The S-L LMC can, in general, severely constrain the set of achievable input pmfs (Ex. \ref{Ex:DueckExampleSlightlyModified}, Rem. \ref{Rem:WhyLCIsSub-optimal}). If $S_{1},S_{2}$ possess a \textit{near GKW} part, i.e., $K_{j}=f_{j}(S_{j}): j \in [2]$ such that $\xi=P(K_{1}\neq K_{2})$ is `quite' small, (relatively) large \textit{sub-blocks} of length $l$ could agree with high probability. Indeed, $\xi^{[l]}=P(K_{1}^{l}\neq K_{2}^{l}) = 1-(1-\xi)^{l}\leq l\xi$ can be held small by appropriately choosing $l$. If the encoders employ conditional coding, i.e., identical source to channel mappings, \textit{restricted} to sub-blocks of \textit{fixed length} $l$, then the encoders can enjoy the benefits of separation and co-ordination on a good fraction (at least $\sim(1-\l\xi)$) of these $l-$length sub-blocks. Indeed, we prove in Section \ref{SubSec:CodingForExample} that the latter technique outperforms Liu and Chen's coding technique (LC technique). In Section \ref{Sec:Generalization}, we build on this idea to propose a general coding technique for an arbitrary problem instance.
\begin{figure}
\centering
\includegraphics[width=2.9in]{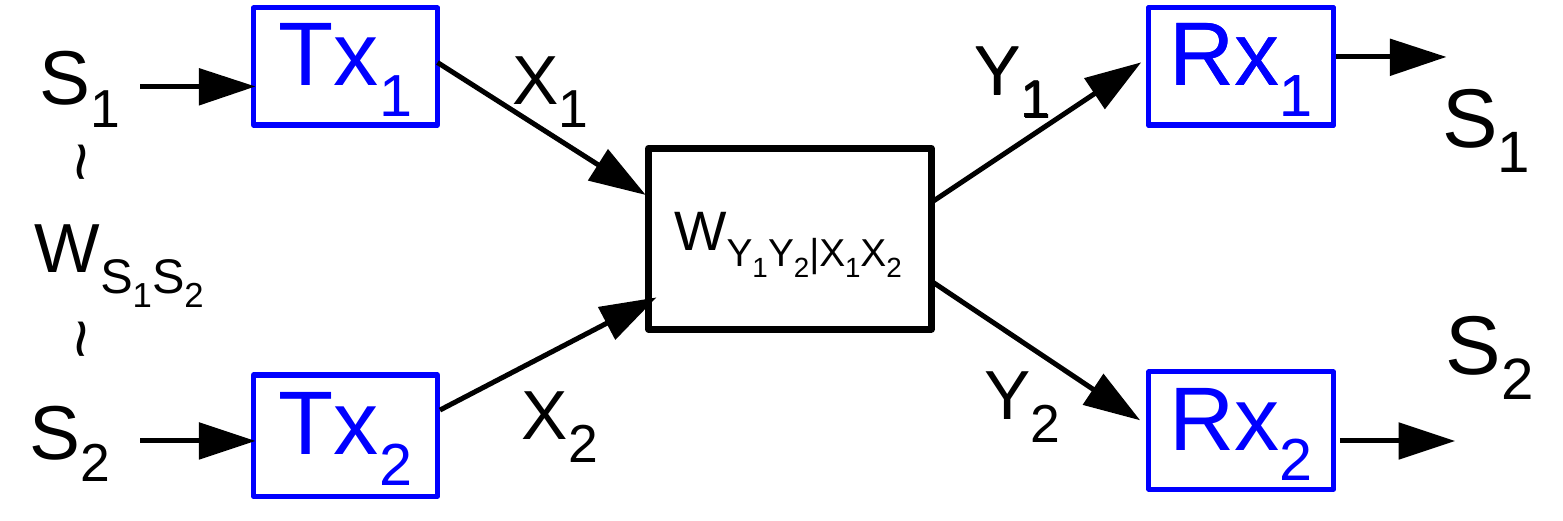}
\caption{Transmission of correlated sources over 2-IC.}
\label{Fig:GeneralProblem}
\end{figure}

Joint source-channel coding over multi-user channels has received considerable attention with regard to characterizing fundamental limits \cite{201604TIT_KheChe-Shrt, 201410TIT_MurDabGun-Shrt, 201607ISIT_WigLap-Shrt} and designing feasible strategies \cite{201504TIT_MinLimKim-Shrt}. Fundamental performance limits for communicating Gaussian sources over Gaussian channels have been studied in \cite{201006TIT_LapTin-Shrt, 201110TIT_TiaDigSha-Shrt}\cite{201508ITWAguGun-Shrt} and the latter considers communication over on IC.

Our findings highlight the sub-optimality of (current known) S-L joint source-channel coding techniques (Rem. \ref{Rem:Multi-LetterCodingScheme}). Notwithstanding this, we derive a \textit{S-L characterization} (Rem. \ref{Rem:Single-LetterCharacterization}) of a new inner bound that \textit{strictly enlarges} the current known best (LC bound). Indeed, the fixed B-L coding technique is an $l-$letter technique. An important second contribution is therefore, a framework - codes and tools (interleaving) - for stitching together S-L techniques in a way that permits performance analysis of the resulting $l-$letter technique and characterization via S-L expressions. Stepping beyond performance characterization, our third contribution is a new coding technique for communicating correlated sources over an IC.

This is part of an evolving work \cite{201607ISIT_Pad, 201601arXiv_Pad, 201701arXiv_Pad} on joint source-channel coding, and is inspired by Dueck's novel example \cite{198103TIT_Due-Shrt} and his \textit{very specific}, yet ingenious, fixed B-L coding. Here, we restrict attention to separation based schemes\footnote{As was done in \cite{201607ISIT_Pad}.} and focus on providing a clear step-by-step description of the ideas. Unifying fixed B-L coding and inducing source correlation onto channel inputs \cite{198011TIT_CovGamSal} involves additional challenges, and is dealt in a concurrent submission \cite[Sec. V]{201601arXiv_Pad}.

\section{Preliminaries : Notation, problem statement}
\label{Sec:Preliminaries}
We let an \underline{underline} denote an appropriate aggregation of related objects. For example, $\underline{S}$ will be used to represent a pair $S_{1},S_{2}$ of RVs. $\underline{\mathcal{S}}$ will be used to denote either the pair $\mathcal{S}_{1},\mathcal{S}_{2}$ or the Cartesian product $\mathcal{S}_{1}\times \mathcal{S}_{2}$, and will be clear from context. 
When $j \in \{1,2\}$, then $\msout{j}$ will denote the complement index, i.e., $\{j,\msout{j}\}=\{1,2\}$. For $m\in \naturals$, $[m]\define \{1,\cdots,m\}$. 
For a pmf $p_{U}$ on $\mathcal{U}$, $b^{*} \in \mathcal{U}$ will denote a symbol with the least positive probability wrt $p_{U}$.\footnote{The underlying pmf $p_{U}$ will be clear from context.} Boldfaces letters such as $\bold{A}$ denote matrices. For a $\mtimesl$ matrix $\bold{A}$, (i) $\bold{A}(t,i)$ denotes the entry in row $t$, column $i$, (ii) $\bold{A}(1:m,i)$ denotes the $i^{th}$ column, $\bold{A}(t,1:l)$ denotes $t^{th}$ row. ``with high probability'', ``single-letter'', ``long Markov chain'', ``block-length'' are abbreviated whp, S-L, LMC, B-L respectively.

For a point-to-point channel (PTP) $(\mathcal{U},\mathcal{Y},\mathbb{W}_{Y|U})$, let $E_{r}(R,p_{U},\mathbb{W}_{Y|U})$ denote the random coding exponent for constant composition codes of type $p_{U}$ and rate $R$ \cite[Thm 10.2]{CK-IT2011-Shrt}. Specifically, $E_{r}(R,p_{U},\mathbb{W}_{Y|U})$ is defined as\vspace{-0.1in}
\begin{eqnarray}
 \label{Eqn:RandomCodingExponent}
 \min_{V_{Y|U}} \left\{D(V_{Y|U}||\mathbb{W}_{Y|U}|p_{U})+|I(p_{U};V_{Y|U})-R|^{+}\right\}.\nonumber
\end{eqnarray}
For RVs $A_{1},A_{2}$, we let $\xi^{[l]}(\underline{A}) \define P(A_{1}^{l}\neq A_{2}^{l})$, and $\xi(\underline{A}) \define \xi^{[1]}(\underline{A})$. Throughout Sec. \ref{Sec:DuecksExample}, $\xi^{[l]}=\xi^{[l]}(\ulineS)$ and $\xi=\xi(\ulineS)$. If $\underline{A}$ is IID, we note\footnote{$(1-x)^{l} \geq 1-xl\mbox{ for }x \in [0,1]$.} $\xi^{[l]}=1-(1-\xi)^{l} \leq l\xi$. We let $\tau_{l,\delta}(K) = 2|\mathcal{K}|\exp\{ -2\delta^{2}p_{K}^{2}(a^{*})l\}$ denote an upper bound on $P(K^{l} \notin T_{\delta}^{l}(K))$ where $T_{\delta}^{l}(K)$ denotes our typical set.

Consider a $2-$user IC with input alphabets $\inpsetX,\inpsetY$, output alphabets $\mathcal{Y}_{1},\mathcal{Y}_{2}$, and transition probabilities $\mathbb{W}_{Y_{1}Y_{2}|\InpX\InpY}$. Let $\ulineS \define (S_{1},S_{2})$, taking values over $\ulineCalS \define \mathcal{S}_{1} \times \mathcal{S}_{2}$ with pmf $\mathbb{W}_{S_{1}S_{2}}$, denote a pair of information sources. For $j\in[2]$, encoder $j$ observes $S_{j}$, and decoder $j$ aims to reconstruct $S_{j}$ with arbitrarily small probability of error (Fig. \ref{Fig:GeneralProblem}). If this is possible, we say $\ulineS$ \textit{is transmissible over IC} $\mathbb{W}_{\underline{Y}|\ulineX}$. In this article, our objective is to characterize sufficient conditions under which $(\ulineCalS,\mathbb{W}_{\ulineS})$ is transmissible over IC $\mathbb{W}_{\underline{Y}|\ulineX}$.
\section{Fixed B-L coding over isolated channels}
\label{Sec:DuecksExample}
We consider a simple generalization (Ex. \ref{Ex:DueckExampleSlightlyModified}) of Dueck's example \cite{198103TIT_Due-Shrt} and propose a coding technique that enables transmissibility of the sources over the corresponding IC. We also prove all current known joint source-channel coding techniques, and in particular LC, is incapable of the same. On the one hand, this proves strict sub-optimality of the latter\footnote{Strict sub-optimality of LC technique can be inferred from \cite{198103TIT_Due-Shrt}. To verify this, modify the MAC therein, to an IC with identical outputs, and use the arguments presented in proof of Lemma \ref{Lem:ExDoesNOTSatisfyLCConditions}. Surprisingly, this has not been documented in \cite{201112TIT_LiuChe-Shrt}.}, and on the other hand, highlights the need for fixed BL coding.
%
\begin{example}
 \label{Ex:DueckExampleSlightlyModified}
 Source alphabets $\mathcal{S}_{1}=\mathcal{S}_{2} = \{0,1,\cdots,a-1 \}^{k}$. Let $\eta\geq 8$ be a positive even integer. The source PMF is
 \begin{eqnarray}
  \label{Eqn:SourceDesc}
  \mathbb{W}_{S_{1}S_{2}}(c^{k},d^{k}) = \begin{cases} \frac{k-1}{k} &\mbox{if } c^{k}=d^{k}=0^{k} \\
\frac{a^{\eta k}-1}{ka^{\eta k}(a^{k}-1)} & \mbox{if } c^{k}=d^{k}, c^{k}\neq 0^{k},\\
\frac{1}{ka^{\eta k}(a^{k}-1)}&\mbox{if } c^{k}=0^{k}, d^{k}\neq 0^{k}\mbox{, and}\end{cases}\nonumber
\end{eqnarray}
$0$ otherwise. Note that in the above eqn. $c^{k},d^{k} \in \mathcal{S}_{1}$ abbreviate the $k$ `digits' $c_{1}c_{2}\cdots c_{k}$ and $d_{1}d_{2}\cdots d_{k}$ respectively. Fig. \ref{Fig:SourceDescription} depicts the source pmf with $\eta = 6$.

The IC is depicted in Fig. \ref{Fig:Step1SetUp} and described below. The input alphabets are $\mathcal{U} \times \mathcal{X}_{1}$ and $\mathcal{U}\times \mathcal{X}_{2}$. The output alphabets are $\mathcal{Y}_{0} \times \mathcal{Y}_{1}$ and $\mathcal{Y}_{0}\times \mathcal{Y}_{2}$. $\mathcal{U} = \mathcal{Y}_{0} = \{0,1,\cdots,a-1\}$. $(U_{j},X_{j}) \in \mathcal{U} \times \mathcal{X}_{j}$ denotes encoder $j$'s input and $(Y_{0},Y_{j}) \in \mathcal{Y}_{0}\times \mathcal{Y}_{j}$ denotes symbols received by decoder $j$. The symbols $Y_{0}$ received at both decoders agree with probability $1$. $\mathbb{W}_{Y_{0}Y_{1}Y_{2}|X_{1}U_{1}X_{2}U_{2}} = \mathbb{W}_{Y_{1}|X_{1}}\mathbb{W}_{Y_{2}|X_{2}}\mathbb{W}_{Y_{0}|U_{1}U_{2}}$, where
\begin{equation}
\label{Eqn:ChnlDesc}
\mathbb{W}_{Y_{0}|U_{1}U_{2}}(y_{0}|u_{1},u_{2}) = \begin{cases} 1 &\mbox{if } y_{0}=u_{1}=u_{2} \\
1 & \mbox{if } u_{1}\neq u_{2}, y_{0}=0,\mbox{ and}
\end{cases}\nonumber
 \end{equation}
$0$ otherwise. The capacities of PTP channels $\mathbb{W}_{Y_{j}|X_{j}}:j=1,2$ are $\mathcal{C}\define h_{b}(\frac{2}{k})+\frac{2}{k}\log a$ and $\mathcal{C} +h_{b}(\frac{2}{ka^{\eta k}})$ respectively.
\end{example}
\begin{figure}
\centering
\includegraphics[width=3.2in]{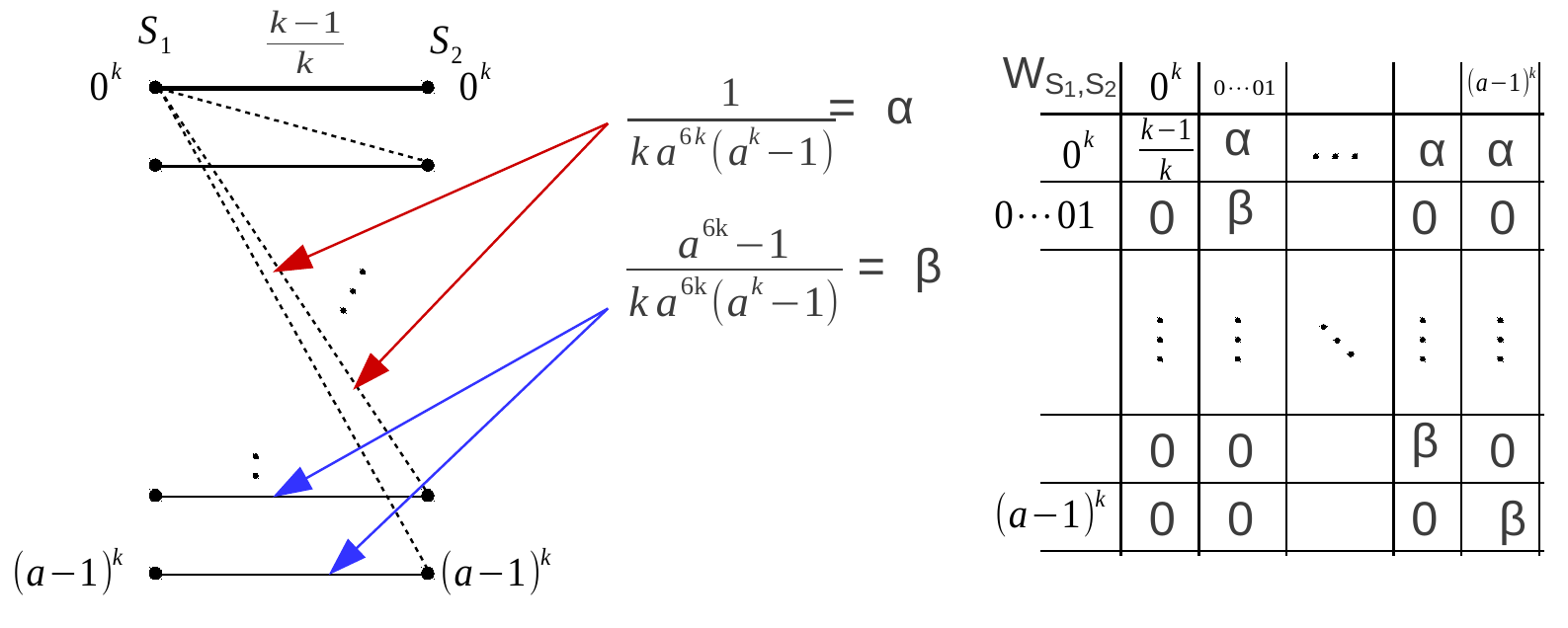}
\caption{On the left, the source pmf is depicted through a bipartite graph. Larger probabilities are depicted through edges with thicker lines. On the right, we depict the probability matrix.}
\label{Fig:SourceDescription}
\end{figure}
\begin{figure}
\centering
\includegraphics[width=2.9in]{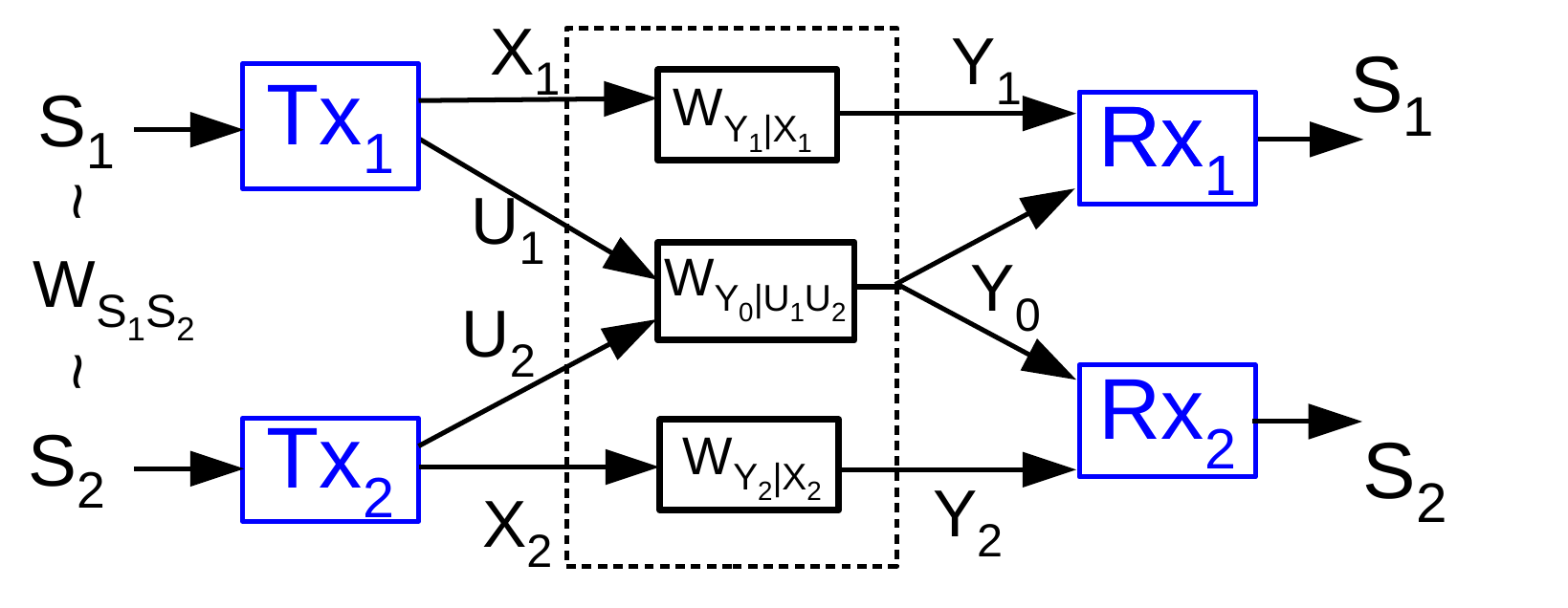}
\caption{Source channel setup of Example \ref{Ex:DueckExampleSlightlyModified}.}
\label{Fig:Step1SetUp}
\end{figure}

We identify key aspects of Ex. \ref{Ex:DueckExampleSlightlyModified}. Let $a,k$ be chosen sufficiently/quiet large. While $\ulineS$ does \textit{not} possess a GKW part, $S_{1}$ and $S_{2}$ agree on most, but not all, realizations. Indeed, $\xi(\ulineS) = \frac{1}{ka^{\eta k}}$ is very small. We also have $H(S_{1}),H(S_{2}),H(\underline{S})$ are\footnote{In fact, $(1-\frac{1}{a^{\eta k}})\log a - \frac{\log 2}{k} \leq \substack{H(S_{1}), H(S_{2}),\\ H(\underline{S})} \leq \log a +h_{b}(\frac{1}{k})+\frac{\log 2}{k}$} $\sim \log a$ and naturally $H(S_{j}|S_{\msout{j}}): j\in [2]$ is very small. Each decoder benefits a lot by decoding \textit{either} source, or any function thereof.
Secondly, $\mathbb{W}_{\ulineS}$ is `\textit{very far}' from the uniform pmf, and hence \textit{any} S-L function $g_{j}(S_{j})$ will remain `considerably' non-uniform.

The IC supports a sum capacity of at most $\log a +2\mathcal{C}+h_{b}(\frac{2}{ka^{\eta k}})$. Since $\sup_{p_{X_{1}X_{2}}}I(\ulineX;\ulineY) = 2\mathcal{C}+h_{b}(\frac{2}{ka^{\eta k}})$, the $\mathbb{W}_{Y_{0}|\ulineU}-$channel must carry bulk of the information (for large $k$). The latter channel carries very little information when $U_{1}\neq U_{2}$, and moreover, it is necessary that $U_{1}=U_{2}$ \textit{and} $U_{1}=U_{2}$ be close to uniform, in order to communicate $\sim \log a$ bits over $\mathbb{W}_{Y_{0}|\ulineU}$.

We first prove Ex. \ref{Ex:DueckExampleSlightlyModified} does not satisfy LC conditions. The proof is based on the following argument. Suppose LC technique enables decoder $j$ reconstruct $S_{j}$ for $j \in [2]$, then both the decoders can reconstruct $S_{1}$ and $S_{2}$ if each of them is provided $Y_{1}$ \textit{and} $Y_{2}$ (and $Y_{0}$). We then prove that this is not permissible, by following an argument similar to \cite[Sec. III.c]{198103TIT_Due-Shrt}.

\begin{lemma}
 \label{Lem:ExDoesNOTSatisfyLCConditions}
Consider Ex. \ref{Ex:DueckExampleSlightlyModified} with any $\eta \in \naturals$. There exists an $a_{*} \in \naturals, k_{*} \in \naturals$, such that for any $a\geq a_{*}$ and any $k\geq k_{*}$, the sources and the IC described in Ex. \ref{Ex:DueckExampleSlightlyModified} do \textit{not} satisfy LC conditions that are stated in \cite[Thm. 1]{201112TIT_LiuChe-Shrt}.
\end{lemma}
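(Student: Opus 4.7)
The plan is to follow the reduction flagged in the paragraph preceding the lemma and adapt Dueck's argument \cite[Sec.~III.c]{198103TIT_Due-Shrt} to the IC of Ex.~\ref{Ex:DueckExampleSlightlyModified}. Suppose for contradiction that the LC conditions of \cite[Thm.~1]{201112TIT_LiuChe-Shrt} are satisfied. These furnish auxiliary RVs---written schematically as $(U_{1}, X_{1}, \cdots)$ and $(U_{2}, X_{2}, \cdots)$, together with an implicit time-sharing variable that I suppress---whose joint pmf with $\underline{S}$ obeys the LMC $(U_{1}, X_{1}, \cdots) - S_{1} - S_{2} - (U_{2}, X_{2}, \cdots)$ and which support reliable decoding of $S_{j}$ at decoder $j$. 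My first step is the enhancement noted before the lemma: since decoder $j$ already reconstructs $S_{j}$, the enhanced receiver handed all three outputs $(Y_{0}, Y_{1}, Y_{2})$ reconstructs $\underline{S}$, and Fano's inequality gives $H(\underline{S}) \leq I(U_{1} U_{2} X_{1} X_{2}; Y_{0} Y_{1} Y_{2}) + \epsilon$.

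My second step bounds the right-hand side using the product structure $\mathbb{W}_{Y_{0}|\underline{U}}\mathbb{W}_{Y_{1}|X_{1}}\mathbb{W}_{Y_{2}|X_{2}}$:
\begin{equation}
I(U_{1} U_{2} X_{1} X_{2}; Y_{0} Y_{1} Y_{2}) \leq I(\underline{U}; Y_{0}) + I(X_{1}; Y_{1}) + I(X_{2}; Y_{2}). \nonumber
\end{equation}
Since $Y_{0}$ is deterministic in $\underline{U}$, the first term equals $H(Y_{0}) \leq \log a$, and the others are bounded by the PTP capacities listed in Ex.~\ref{Ex:DueckExampleSlightlyModified}, giving $I(X_{1};Y_{1}) + I(X_{2};Y_{2}) \leq 2\mathcal{C} + h_{b}(2/(ka^{\eta k}))$. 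Plugging in $2\mathcal{C} = 2h_{b}(2/k) + (4/k)\log a$ and the footnote lower bound $H(\underline{S}) \geq (1 - a^{-\eta k})\log a - (\log 2)/k$, rearranging yields $H(Y_{0}) \geq (1 - c/k)\log a - \delta(a,k)$ for an absolute constant $c$ and a $\delta(a,k)$ that vanishes as $a, k \to \infty$ in the prescribed order.

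My third step is structural: $H(Y_{0})$ being within $\delta$ of its maximum $\log a$ forces, by Pinsker applied to the definition (\ref{Eqn:ChnlDesc}), both $P(U_{1}\neq U_{2}) \to 0$ and the common value of $(U_{1},U_{2})$ to be close to uniform on $\{0,\cdots,a-1\}$. The contradiction then comes from the LMC together with the concentration of $\underline{S}$ at $(0^{k}, 0^{k})$. Since $P(S_{1}=S_{2}=0^{k}) = (k-1)/k$ is bounded away from $0$, smallness of $P(U_{1}\neq U_{2})$ transfers, up to a factor $k/(k-1)$, to $P(U_{1}\neq U_{2}\mid S_{1}=S_{2}=0^{k})$; conditional independence of $U_{1}, U_{2}$ given $\underline{S}$ then gives
\begin{equation}
1 - o(1) \leq \sum_{u} P(U_{1}=u\mid S_{1}=0^{k})\, P(U_{2}=u\mid S_{2}=0^{k}), \nonumber
\end{equation}
and Cauchy--Schwarz compels both conditionals to concentrate on a common symbol $u^{*}$. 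Consequently $P(Y_{0}=u^{*}) \geq (k-1)/k - o(1)$ and $H(Y_{0}) \leq h_{b}(1/k) + (1/k)\log a + o(1)$, which combined with the lower bound from Step 2 yields $(1 - (c+1)/k)\log a \leq o(1)$, violated for $k \geq k_{*}$ and $a \geq a_{*}$ chosen large enough.

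The main obstacle is not any individual inequality but the clean threading of the LC auxiliaries (including the time-sharing variable and any public/private auxiliaries present in \cite[Thm.~1]{201112TIT_LiuChe-Shrt}) through Steps 1--2, so that only the LMC structure is invoked and the quantitative slack survives averaging over the time-sharing variable. A secondary subtlety is Step 3's structural implication---converting an entropy deficit on $Y_{0}$ into a product-form concentration of both conditional pmfs on a common symbol---which I would handle by combining Pinsker's inequality on $Y_{0}$'s deviation from uniform with the Cauchy--Schwarz step above.
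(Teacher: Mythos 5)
Your overall strategy matches the paper's up to the last step: reduce the LC conditions to the statement that $H(\underline{S})\leq I(\underline{XU};\underline{Y})$ (so that, after subtracting the satellite capacities $2\mathcal{C}+h_{b}(2/(ka^{\eta k}))$, the entropy $H(Y_{0})=I(\underline{U};Y_{0})$ must be nearly $\log a$), and then exploit the fact that the source concentration at $(0^{k},0^{k})$ forces $(U_{1},X_{1})$ and $(U_{2},X_{2})$ to be conditionally independent there. Where you genuinely diverge is the impossibility step. The paper conditions on $R=\mathds{1}_{\{\underline{S}=(0^{k},0^{k})\}}$ and $Q=q$, takes a \emph{maximum over all product pmfs} $p_{U_{1}}p_{U_{2}}$, and proves the universal bound $H(Y_{0})\leq \log 2+\frac{3}{4}\log a$ by a three-case analysis (a symbol of mass $\geq 1/2$ under both marginals, under exactly one, or under neither always produces an atom of $Y_{0}$ of mass $\geq 1/4$). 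Because this bound holds for \emph{every} product pmf, it is uniform in $q$ and no averaging argument is needed. You instead run the argument in reverse: from the rate requirement you force $H(Y_{0})$ up to $(1-c/k)\log a$, deduce $P(U_{1}\neq U_{2})$ is small, transfer this to the conditional law given $\underline{S}=(0^{k},0^{k})$, and use conditional independence plus Cauchy--Schwarz to force both conditionals onto a common atom $u^{*}$, which drives $H(Y_{0})$ back down to roughly $\frac{1}{k}\log a$. This is a valid alternative and arguably more revealing about \emph{why} the example fails (it quantifies the tension in Remark \ref{Rem:WhyLCIsSub-optimal}), at the cost of being quantitatively more delicate: the $o(1)$ slack in $P(Y_{0}\neq u^{*})$ multiplies $\log a$ in the entropy bound, so the deficits must be tracked explicitly.

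Two deferred items still need real work. First, the conditional independence you invoke in Step 3 holds only given $Q$ (and the $\underline{W}$ structure), whereas your entropy lower bound on $H(Y_{0})$ is an average over $Q$; in your formulation you must first localize, e.g.\ by Markov's inequality, to a set of $q$ of probability $1-O(1/\sqrt{k})$ on which $H(Y_{0}|Q=q)\geq(1-\epsilon)\log a$, and run the concentration argument per $q$. The paper's max-over-product-pmfs phrasing sidesteps this entirely. Second, your Step 1 is a Fano-style heuristic rather than a derivation from the actual single-letter inequalities of \cite[Thm.~1]{201112TIT_LiuChe-Shrt}: getting from $I(S_{1}X_{1}U_{1};Y_{1}Y_{0}|Q\underline{W})+I(W_{1}S_{2}X_{2}U_{2};Y_{2}Y_{0}|Q)-I(S_{1};S_{2})$ down to $I(\underline{XU};\underline{Y}|Q)$ is precisely the first half of the paper's proof, using the Markov chains $Q\underline{SW}-\underline{XU}-\underline{Y}$ and $S_{1}X_{1}U_{1}-S_{2}Q\underline{W}-X_{2}U_{2}$ and the independence of $\underline{S}$ from $(\underline{W},Q)$. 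It is routine bookkeeping, but it is not free, and without it you have not actually contradicted the stated LC conditions.
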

The proof is detailed in \cite{201701arXiv_Pad}.
\ifJournal
\begin{proof}
 Since the sources do not have a GKW part, it suffices to prove that Ex. \ref{Ex:DueckExampleSlightlyModified} does not satisfy conditions stated in \cite[Corollary 1]{201112TIT_LiuChe-Shrt}. Let $\ulineS Q\ulineW\ulineX\ulineU\ulineY$ be \textit{any} collection of RVs whose pmf factorizes as $\mathbb{W}_{\underline{S}}p_{Q}p_{W_{1}|Q}p_{W_{2}|Q}p_{X_{1}U_{1}|S_{1}W_{1}Q}p_{X_{2}U_{2}|S_{2}W_{2}Q}\mathbb{W}_{\underline{Y}|\underline{X}\underline{U}}$. We prove
 \begin{eqnarray}
H(\underline{S}) \!> \!I(S_{1}X_{1}U_{1};Y_{1}Y_{0}|Q\underline{W})\!+\!I(W_{1}S_{2}X_{2}U_{2};Y_{2}Y_{0}|Q)\nonumber\\
\label{Eqn:Eqn44OfCorollary1LiuChen}
\!\!\!\!\!-I(S_{1};S_{2})
 \end{eqnarray}
and thereby contradicting \cite[Eqn. (44), Corollary 2]{201112TIT_LiuChe-Shrt}. Towards that end, we first note \begin{eqnarray}
H(\underline{S}) &\geq& H(S_{2}) =h_{b}(\frac{1}{k})+\frac{1}{k}\log (a^{k}-1) \geq \frac{1}{k}\log\frac{ka^{k}}{2}\nonumber\\\
&\geq& \log a +\frac{1}{k}\log (\frac{k}{2}) \geq \log a,\nonumber
\end{eqnarray}
whenever $a^{k} \geq 2$. Secondly, the RHS of (\ref{Eqn:Eqn44OfCorollary1LiuChen}) can be bounded above by
\begin{eqnarray}
 \label{Eqn:SumRateOfICIsSmallerThanMACSumRate}
 \!\!\!\!\!\!\!\!\!\!\!\!\lefteqn{\!\!\!\!\!\!\!\!\!\!\!\!\!I(S_{1}X_{1}U_{1};Y_{1}Y_{0}|Q\underline{W})\!+\!I(W_{1}S_{2}X_{2}U_{2};Y_{2}Y_{0}|Q)-I(S_{1};S_{2})}\nonumber\\
 \!\!\!\!\!\!\!\!\!\!\!\!\!\!\!\! \lefteqn{\!\!\!\!\!\!\!\!\!\!\!\!\leq I({S_{1}X_{1}U_{1};Y_{1}Y_{0}|Q\underline{W}})+I({\underline{W}S_{2}X_{2}U_{2};\underline{Y}|Q})- I({S_{1};S_{2}})} \nonumber\\
 \!\!\!\!&\leq& I({\underline{WSXU};\underline{Y}|Q})+I({S_{1}X_{1}U_{1};\underline{Y}|Q\underline{W}})\nonumber\\
 \!\!\!\!&&-I({S_{1}X_{1}U_{1};\underline{Y}|Q\underline{W}S_{2}X_{2}U_{2}})- I({S_{1};S_{2}}) \nonumber\\
 \!\!\!\!&=& I({\underline{XU};\underline{Y}|Q})+I({S_{1}X_{1}U_{1};\underline{Y}|Q\underline{W}})\nonumber\\
 \label{Eqn:ChnlAndSrcMarkovChain}
 \!\!\!\!&&-I({S_{1}X_{1}U_{1};\underline{Y}X_{2}U_{2}|Q\underline{W}S_{2}})- I({S_{1};S_{2}}) \\
 \label{Eqn:SrcsIndependentOfW}
 \!\!\!\!&=& I({\underline{XU};\underline{Y}|Q})+I({S_{1}X_{1}U_{1};\underline{Y}|Q\underline{W}})\nonumber\\&&-I({S_{1}X_{1}U_{1};\underline{Y}X_{2}U_{2}S_{2}|Q\underline{W}})\nonumber\\
 &\leq& I({\underline{XU};\underline{Y}|Q}) \nonumber
 \end{eqnarray}
where (\ref{Eqn:ChnlAndSrcMarkovChain}) follows from Markov chains $Q\underline{SW}-\underline{XU}-\underline{Y}$ and $S_{1}X_{1}U_{1}-S_{2}Q\underline{W} - X_{2}U_{2}$ and (\ref{Eqn:SrcsIndependentOfW}) follows from the independence of $\underline{S}$ and $\underline{W}Q$.

We now argue the RHS above is strictly lesser than $\log a$. Our argument pretty much follows Dueck \cite[Sec. III.c]{198103TIT_Due-Shrt} verbatim and is provided here for the sake of completeness. Let $R=\mathds{1}_{\{(S_{1},S_{2})=(0^{k},0^{k}) \}}$.
\begin{eqnarray}
 \label{Eqn:DuecksArgumentPart1}
 I({\underline{XU};\underline{Y}|Q}) \leq I({\underline{XU}R;\underline{Y}|Q}) \leq \log 2+I({\underline{XU};\underline{Y}|Q,R})\nonumber\\
 \leq \log 2 + \frac{\log |\mathcal{Y}_{0}\times\mathcal{Y}_{1}\times\mathcal{Y}_{2}|}{k} +(1-\frac{1}{k})I({\underline{XU};\underline{Y}|Q,R=1})\nonumber
\end{eqnarray}
We focus on the third term in the above sum. Conditioned on $R=1$, the sources are equal to $(0^{k},0^{k})$. It can be verified that $X_{1}U_{1}-S_{1}Q-S_{2}Q-X_{2}U_{2}$. Given $Q=q,R=1$, $(X_{1},U_{1})$ is independent of $(X_{2},U_{2})$ and hence 
\begin{eqnarray}
I({\underline{XU};\underline{Y}|Q,R=1}) \leq \max_{p_{X_{1}U_{1}}p_{X_{2}U_{2}}} I({\underline{XU};\underline{Y}|Q,R=1})\nonumber\\
\label{Eqn:Step3}
\leq 2\mathcal{C}+h_{b}(\frac{2}{ka^{\eta k}})+\max_{p_{U_{1}}p_{U_{2}}\mathbb{W}_{Y_{0}|\ulineU}} H(Y_{0})
\end{eqnarray}
We now evaluate an upper bound on the maximum value of $H(Y_{0})$ subject to $U_{1},U_{2}$ being independent. We evaluate the following three possible cases.

Case 1a : For some $u\in\mathcal{U}$, $P(U_{1}=u)\geq \frac{1}{2}$ and $P(U_{2}=u)\geq \frac{1}{2}$. Then $P(Y_{0}=u)\geq \frac{1}{4}$ (independence of $U_{1},U_{2}$) and hence $H(Y_{0})\leq \log 2 + \frac{3}{4}\log a$.

Case 1b : For some $u\in\mathcal{U}$, $P(U_{1}=u)\geq \frac{1}{2}$ and $P(U_{2}=u)\leq \frac{1}{2}$. Then $P(U_{2}\neq u)\geq \frac{1}{2}$ and hence $P(Y_{0}=0) \geq \frac{1}{4}$ and hence $H(Y_{0})\leq \log 2 + \frac{3}{4}\log a$.

Case 2a : For every $u \in \mathcal{U}$, $P(U_{1}=u)\leq \frac{1}{2}$. Then for any $u \in \mathcal{U}$, $P(U_{2} \neq U_{1}) = \sum_{u}\sum_{z\neq u}P(U_{2}=u)P(U_{1}=z) \geq \frac{1}{2}\sum_{u}P(U_{2}=u)=\frac{1}{2}$, implying $P(Y_{0}=0)\geq \frac{1}{2}$ and hence $H(Y_{0})\leq \log 2 + \frac{3}{4}\log a$.

In all cases, we have $H(Y_{0})\leq \log 2 + \frac{3}{4}\log a$. Substituting through (\ref{Eqn:Step3}) and above, we conclude
\begin{eqnarray}
 \label{Eqn:}
 I({\underline{XU};\underline{Y}|Q}) \leq 2\log2+2\mathcal{C}+h_{b}(\frac{2}{ka^{\eta k}})+\frac{3}{4}\log a + \frac{\log |\ulineCalY|}{k}\nonumber\\
 < \log a\nonumber
\end{eqnarray}
for sufficiently large $k,a$.
\end{proof}
\fi
\begin{remark}
\label{Rem:WhyLCIsSub-optimal}
Why is the LC technique incapable of communicating $\ulineS$? Any valid pmf $p_{U_{1}U_{2}}$ induced by a S-L coding scheme is constrained to the LMC $U_{1}-S_{1}-S_{2}-U_{2}$. For $j\in [2]$, $p_{U_{j}|S_{j}}$ can equivalently be viewed as $U_{j}=g_{j}(S_{j},W_{j})$, for some function $g_{j}$ and RV $W_{j}$, that satisfy $W_{1} \independent W_{2}$. Owing to the latter, $W_{1}$ and/or $W_{2}$ being \textit{non-trivial} RVs, reduces $P(U_{1}=U_{2})$. If we let, $W_{1},W_{2}$ be deterministic, the only way to make $U_{j}$ uniform is to pool less likely symbols. However, the source is `highly' non-uniform, and even by pooling \textit{all} the less likely symbols, we can gather a probability, of at most, $\frac{1}{k}$. Consequently, any $p_{U_{1}U_{2}}$ induced via a S-L coding scheme is sufficiently far from any pmf that satisfies $U_{1}=U_{2}$ whp \textit{and} $U_{1}=U_{2}$ close to uniform.
\end{remark}
\begin{remark}
 \label{Rem:Multi-LetterCodingScheme} An $l-$letter (multi-letter with $l>1$) coding scheme is constrained by an $l-$letter LMC $U_{1}^{l}-S_{1}^{l}-S_{2}^{l}-U_{2}^{l}$.\footnote{Constraint will \textit{not} be via a S-L LMC.} Suppose we choose $l$ \textit{reasonably} large such that 1) $\xi^{[l]}(\ulineS)$ is not high, and 2) $S_{j}^{l}$ is \textit{reasonably} uniform on its typical set $T_{\delta}^{l}(S_{j})$, and define $U_{j}:j \in [2]$ through \textit{identical} functions $U_{j}^{l}=g(S_{j}^{l}):j \in [2]$, then one can easily visualize the existence of $g$ such that $p_{U_{1}^{l}U_{2}^{l}}$ satisfies the twin objectives of $U_{1}^{l}=U_{2}^{l}$ whp and $U_{1}^{l}=U_{2}^{l}$ is close to uniform. Our coding scheme, will in fact, identify such $g$ maps. This portrays the sub-optimality of S-L schemes for joint source-channel coding.
\end{remark}


\subsection{Fixed Block-length Coding over a Noiseless Channel}
\label{SubSec:CodingForExample}
In order to input codewords on the $\mathbb{W}_{Y_{0}|\underline{U}}-$channel, that agree, we employ the same source code, same channel code and same mapping, each of \textit{fixed B-L}\footnote{irrespective of the desired prob. of error} $l$, at both encoders. $l$ is chosen large enough such that the source can be reasonably efficiently compressed, and yet small enough, to ensure $\xi^{[l]}(\ulineS)$ is reasonably small. We refer to these $l-$length blocks as \textit{sub-blocks}. Since $l$ is fixed, there is a non-vanishing probability that these source sub-blocks will be decoded erroneously. An outer code, operating on an arbitrarily large number $m$ of these sub-blocks, will carry information to correct for these `errors'. The outer code will operate over satellite channel $\mathbb{W}_{Y_{j}|X_{j}}$. We begin with a description of the fixed B-L codes.

We employ a simple fixed B-L (inner) code. Let $T_{\delta}^{l}(S_{1})$ be the source code, and let $C_{U}=\mathcal{U}^{l}$ be the channel code. Let $lA =\lfloor\log a^{l}\rfloor$ bits, of the $\lceil\log |T_{\delta}^{l}(S_{1})| \rceil$ bits output by the source code, be mapped to $C_{U}$. Both encoders use the same source code\footnote{Encoder $2$ also employs source code $T_{\delta}^{l}(S_{1})$, (and not $T_{\delta}^{l}(S_{2})$).}, channel code and mapping. 

Suppose we communicate an \textit{arbitrarily large} number $m$ of these sub-blocks on $\ShrdChnl$ as above. Moreover, suppose encoder $j$ communicates the rest of the $lB = \lceil\log |T_{\delta}^{l}(S_{1})| \rceil-lA$ bits output by its source code to decoder $j$ on its satellite channel $\mathbb{W}_{Y_{j}|X_{j}}$.\footnote{Through our description, we assume communication over $\mathbb{W}_{Y_{j}|X_{j}}$ is noiseless. In the end, we prove that the rate we demand of $\mathbb{W}_{Y_{j}|X_{j}}$ is lesser than its capacity, justifying this assumption.} How much more information needs to be communicated to decoder $j$, to enable it reconstruct $S_{j}^{lm}$? We do a simple analysis that suggests a natural coding technique.

View the $m$ sub-blocks of $S_{j}$ as the rows of the matrix $\bold{S}_{j}(1:m,1:l) \in \mathcal{S}_{j}^{m \times l}$. Let $\bold{\hatK}_{j}(1:m,1:l) \in \mathcal{S}_{1}^{ m \times l}$ denote decoder $j$'s reconstruction of $\bold{S}_{1}(1:m,1:l)$\footnote{1) Encoder $j$ could input any arbitrary $C_{U}-$codeword when its sub-block $S_{j}^{l} \notin T_{\delta}^{l}(S_{1})$, and decoder $j$ could declare an arbitrary reconstruction when it observes $Y_{0}^{l}=0^{l}$. 2) $\bold{\hatK}_{2}(1:m,1:l)$ is also viewed as reconstruction of $\bold{S}_{1}(1:m,1:l)$.}. The $m$ sub-blocks
\begin{equation}
 \label{Eqn:Sub-BlocksAreIID}
 \left\{ \left(\bold{S}_{j}(t,1:l),\bold{\hatK}_{j}(t,1:l):j=1,2\right):t \in [m] \right\}
\end{equation}
are iid\footnote{The $l-$length mappings from the source sub-blocks to the $l-$length channel codewords input on the $\mathbb{W}_{Y_{0}|\underline{U}}$ are identical across the sub-blocks and the source sub-blocks are mutually independent.} with an $l-$length distribution $\mathbb{W}_{S_{1}^{l}S_{2}^{l}}p_{\hatK_{1}^{l}\hatK_{2}^{l}|S_{1}^{l}S_{2}^{l}}$\footnote{$p_{\hatK_{1}^{l}\hatK_{2}^{l}|S_{1}^{l}S_{2}^{l}}$ does not necessarily factor, owing to the $l-$length maps.}. Since, in principle, we can operate by treating these $l-$length sub-blocks as a \textit{super-symbol}, and employ standard binning technique over these $m$ super-symbols, decoder $j$ needs only $H(S_{j}^{l}|\hat{K}_{j}^{l})$ bits per source sub-block. We have no characterization of $p_{\hatK_{1}^{l}\hatK_{2}^{l}|S_{1}^{l}S_{2}^{l}}$, and hence we derive an upper bound.
\begin{eqnarray}
 H(S_{j}^{l}|\hat{K}_{j}^{l}) \leq H(S_{j}^{l},\mathds{1}_{\{\hat{K}_{j}^{l} \neq S_{1}^{l}\}}|\hat{K}_{j}^{l}) \leq h_{b}(P(\hat{K}_{j}^{l} \neq S_{1}^{l}))+\nonumber\\
 \label{Eqn:BoundOnAddInformation}
\!\!\!+P(\hat{K}_{j}^{l} \neq S_{1}^{l})\log |\mathcal{S}_{j}^{l}|+P(\hat{K}_{j}^{l} = S_{1}^{l})H(S_{j}^{l}|S_{1}^{l}).\!\!\!\\
\leq l\mathcal{L}^{S}_{l}(P(\hat{K}_{j}^{l} \neq S_{1}^{l}),|\mathcal{S}_{j}|)+lH(S_{j}|S_{1}),\mbox{ where}\nonumber\\
\label{Eqn:AdditionalInformationToGoOnSatelliteChannels}
 \mathcal{L}_{l}^{S}(\phi,|\mathcal{K}|) \define \frac{1}{l}h_{b}(\phi) + \phi\log|\mathcal{K}|,~~~~~~~
\end{eqnarray}
represents the additional source coding rate needed to compensate for the errors in the fixed B-L decoding. It suffices to prove $\mathcal{L}^{S}_{l}(P(\hat{K}_{j}^{l} \neq S_{1}^{l}),|\mathcal{S}_{j}|)+B+H(S_{j}|S_{1})$ $<$ capacity of $\mathbb{W}_{Y_{j}|X_{j}}$. Since $\mathcal{L}_{l}^{S}(\phi,|\mathcal{K}|)$ is non-decreasing in $\phi$ if $\phi < \frac{1}{2}$, we bound $P(\hat{K}_{j}^{l} \neq S_{1}^{l})$ by a quantity that is less than $\frac{1}{2}$. Towards that end, note that $\{ S_{1}^{l} \neq \hat{K}_{j}^{l}\} \subseteq \{ S_{1}^{l}\neq S_{2}^{l}\} \cup \{ S_{1}^{l} \notin T_{\delta}^{l}(S_{1})\}$. Indeed, $S_{1}^{l}=S_{2}^{l} \in T_{\delta}^{l}(S_{1})$ implies both encoders input same $C_{U}-$codeword and agree on the $lB$ bits communicated to their respective decoders. Therefore $P( S_{1}^{l} \neq \hat{K}_{j}^{l})\leq \phi$, where $\phi=\tau_{l,\delta}(S_{1})+\xi(\ulineS)$, 
\begin{eqnarray}\label{Eqn:DefnOfTauLDelta}\!\!\!\!\!\!\tau_{l,\delta}(S_{1}) \!\!\!\!\!&\leq& 2a^{k}\exp\{ -\frac{\delta^{2}l}{2k^{2}a^{2k}}\}\mbox{ and }\xi^{[l] }(\ulineS) \leq \frac{l}{ka^{\eta k}}.
\end{eqnarray}
Choose $l= k^{4}a^{\frac{\eta k}{2}}, \delta = \frac{1}{k}$, substitute in (\ref{Eqn:DefnOfTauLDelta}). Since $\eta\geq 6$, verify  $\phi \leq 2k^{3}a^{-\frac{\eta k}{2}}<\frac{1}{2}$ for sufficiently large $a,k$. Verify\vspace{-0.1in} \begin{equation}\label{Eqn:FinalBoundOnExtraSourceCodingRate}\mathcal{L}^{S}_{l}(2k^{3}a^{-\frac{\eta k}{2}},|\mathcal{S}_{j}|) \leq \frac{1}{4k}\log a\end{equation}for sufficiently large $a,k$. Recall $lB = \lceil\log |T_{\delta}^{l}(S_{1})| \rceil-lA$. Substituting $\delta=\frac{1}{k}$, verify\footnote{Use $H(S_{1})\leq \log a +h_{b}(\frac{1}{k})$ and $|T_{\delta}(S_{1})|\leq \exp\{l(1+\delta)H(S_{1})\}$.}
 \begin{equation}
 \label{Eqn:UpperBoundOnB}
 B \leq (2/l)+(1/k)\log a + (1+(1/k))h_{b}(1/k).
 \end{equation}
Since $h_{b}(\frac{2}{k})-(1+\frac{1}{k})h_{b}(\frac{1}{k}) \geq \frac{1}{2k}\log\frac{k}{256}$ for large enough $k$, RHS of (\ref{Eqn:FinalBoundOnExtraSourceCodingRate}), (\ref{Eqn:UpperBoundOnB}) sum to at most $\frac{2}{l}+h_{b}(\frac{2}{k})+\frac{5}{4k}\log a$ for large enough $a,k$. Furthermore, $H(S_{2}|S_{1})\leq h_{b}(\frac{1}{(k-1)a^{\eta k}})+\frac{2}{a^{\eta k}}\log a \leq h_{b}(\frac{2}{ka^{\eta k}})+\frac{1}{4k}\log a$ for sufficiently large $a,k$. It can now be easily verified that the satellite channels support these rates for large enough $a,k$.

A few details with regard to the above coding technique is worth mentioning. 
$p_{\hatK_{1}^{l}\hatK_{2}^{l}|S_{1}^{l}S_{2}^{l}}$ can in principle be computed, once the fixed block-length codes, encoding and decoding maps are chosen. $S_{j}^{lm}$ will be binned at rate $H(S_{j}^{l}|\hatK_{j}^{l})$ and the  decoder can employ a joint-typicality based decoder using the computed $p_{S_{j}^{l}|\hatK_{j}^{l}}$. We conclude the following.

\begin{thm}
 \label{Thm:StrictSub-OptimalityOfLC}
 The LC conditions stated in \cite[Thm. 1]{201112TIT_LiuChe-Shrt} are not necessary. Refer to Ex. \ref{Ex:DueckExampleSlightlyModified}. There exists $a^{*} \in \naturals$ and $k^{*}\in \naturals$ such that for any $a\geq a^{*}$ and any $k \geq k^{*}$, $S_{1},S_{2}$ and the IC $\mathbb{W}_{\underline{Y}|\underline{X}\underline{U}}$ do not satisfy LC conditions, and yet, $\ulineS$ is transmissible over IC $\mathbb{W}_{\underline{Y}|\underline{X}\underline{U}}$.
\end{thm}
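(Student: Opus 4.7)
The non-necessity half of the theorem is precisely Lemma \ref{Lem:ExDoesNOTSatisfyLCConditions}, so the plan focuses on the transmissibility half. The approach is to formalise the two-layer construction sketched in Sec.~\ref{SubSec:CodingForExample}, namely an inner \emph{fixed block-length} $l$-code on the shared channel $\mathbb{W}_{Y_{0}|\underline{U}}$ paired with an outer Slepian--Wolf-style code on the satellite PTPs $\mathbb{W}_{Y_{j}|X_{j}}$, with the parameter choice $l=k^{4}a^{\eta k/2}$, $\delta=1/k$, and a large outer length $m$. Given these, the work reduces to confirming that the per-source-symbol rate demanded of each satellite channel lies strictly under its capacity for all sufficiently large $a,k$.

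For the inner layer, partition $S_{j}^{lm}$ into $m$ length-$l$ sub-blocks and have both encoders apply \emph{identical} maps: source codebook $T_{\delta}^{l}(S_{1})$ (the same typical set is used by encoder~$1$ \emph{and} encoder~$2$), channel codebook $C_{U}=\mathcal{U}^{l}$, and a fixed injection of the top $lA=\lfloor\log a^{l}\rfloor$ bits of the source-code output into $C_{U}$. Because these maps are memoryless across sub-blocks and $\underline{S}^{lm}$ is iid, the $m$ quadruples in (\ref{Eqn:Sub-BlocksAreIID}) are iid under $\mathbb{W}_{S_{1}^{l}S_{2}^{l}}p_{\hat{K}_{1}^{l}\hat{K}_{2}^{l}|S_{1}^{l}S_{2}^{l}}$, the conditional being determined by the (computable) inner code. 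The outer problem is then a textbook Slepian--Wolf transmission of the source super-symbol $S_{j}^{l}$ with side-information super-symbol $\hat{K}_{j}^{l}$ across the PTP $\mathbb{W}_{Y_{j}|X_{j}}$: bin $S_{j}^{lm}$ at per-sub-block rate $H(S_{j}^{l}|\hat{K}_{j}^{l})$, transmit the remaining $lB$ unused source-code bits uncompressed, and decode via joint typicality on the $m$ iid super-symbols; the outer error vanishes as $m\to\infty$ by standard arguments.

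For the rate check, the satellite channel $\mathbb{W}_{Y_{j}|X_{j}}$ must carry $B+\tfrac{1}{l}H(S_{j}^{l}|\hat{K}_{j}^{l})$ bits per source symbol. The Fano-type inequality (\ref{Eqn:BoundOnAddInformation})--(\ref{Eqn:AdditionalInformationToGoOnSatelliteChannels}) combined with the inclusion $\{S_{1}^{l}\neq\hat{K}_{j}^{l}\}\subseteq \{S_{1}^{l}\notin T_{\delta}^{l}(S_{1})\}\cup \{S_{1}^{l}\neq S_{2}^{l}\}$ bounds this by $B+\mathcal{L}^{S}_{l}(\phi,|\mathcal{S}_{j}|)+H(S_{j}|S_{1})$ with $\phi\leq \tau_{l,\delta}(S_{1})+\xi^{[l]}(\underline{S})$. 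The displayed estimates (\ref{Eqn:DefnOfTauLDelta}) give $\phi\leq 2k^{3}a^{-\eta k/2}<\tfrac{1}{2}$; (\ref{Eqn:FinalBoundOnExtraSourceCodingRate}) and (\ref{Eqn:UpperBoundOnB}), together with the bound $H(S_{2}|S_{1})\leq h_{b}(2/(ka^{\eta k}))+\tfrac{1}{4k}\log a$, then show the total is strictly below $\mathcal{C}$ for $j=1$ and strictly below $\mathcal{C}+h_{b}(2/(ka^{\eta k}))$ for $j=2$, using $\eta\geq 8$ and taking $a\geq a^{*}$, $k\geq k^{*}$ past the finitely many ``large enough'' thresholds. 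Standard PTP random coding on each $\mathbb{W}_{Y_{j}|X_{j}}$ (at a block-length tied to $lm$) realises the above rates with arbitrarily small error, completing the construction.

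The chief conceptual obstacle is precisely that the inner $l$-code has \emph{non-vanishing} error since $l$ is held fixed, so one cannot invoke the usual vanishing-error joint source-channel machinery directly. The resolution is the super-symbol packaging: treating each $l$-sub-block as one symbol restores an iid outer statistic and relegates the inner error to a residual entropy $H(S_{j}^{l}|\hat{K}_{j}^{l})$, which Fano and the bound on $\phi$ render small enough to be absorbed by an outer Slepian--Wolf layer with negligible rate penalty. Once this architecture is fixed, every remaining verification is routine arithmetic that has essentially been carried out in the discussion preceding the theorem.
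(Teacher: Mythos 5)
Your proposal is correct and follows essentially the same route as the paper: the non-necessity half is delegated to Lemma \ref{Lem:ExDoesNOTSatisfyLCConditions}, and the transmissibility half reproduces the fixed block-length inner code on $\mathbb{W}_{Y_{0}|\underline{U}}$ plus super-symbol binning on the satellite channels with the same parameter choices ($l=k^{4}a^{\eta k/2}$, $\delta=1/k$) and the same Fano-type rate accounting as Section \ref{SubSec:CodingForExample}. No gaps.
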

\begin{remark}
 \label{Rem:PerformanceWithl} The above scheme crucially relies on the choice of $l$ - neither too big, nor too small. This is elegantly captured as follows. As $l$ increases, $\xi^{[l]}(\ulineS)\rightarrow 1$, $\tau_{l,\delta}$ (and $g_{\rho,l}$)\footnote{$g_{\rho,l}$ is defined in the sequel.} $\rightarrow 0$. As $l$ decreases, $\xi^{[l]}(\ulineS) \rightarrow \xi(\ulineS)$, and $\tau_{l,\delta}$ (and $g_{\rho,l}$) $\rightarrow 1$. If $\phi \rightarrow 0.5$, $\mathcal{L}^{S}_{l}(\phi,|\mathcal{S}_{j}|)\rightarrow 0.5\log|\mathcal{S}_{j}|=\frac{k}{2}\log a$.
\end{remark}

\section{Fixed BL codes over an arbitrary IC}
\label{Sec:Generalization}
Our analysis (Sec. \ref{SubSec:CodingForExample}) focused on proving\vspace{-0.05in}
\begin{equation}
 \label{Eqn:ConditionCentralToSeparation}
 \mathcal{L}^{S}_{l}(\phi,|\mathcal{S}_{j}|)+B+H(S_{j}|S_{1}) \leq I(X_{j};Y_{j})\vspace{-0.05in}
\end{equation}
where $\phi < \frac{1}{2}$ was an upper bound on $P(\hat{K}_{j}^{l} \neq S_{1}^{l})$. All our sufficient conditions will take this form. The lack of isolation between channels carrying fixed B-L and infinite B-L codes will throw primarily two challenges.\footnote{And an additional loss in the channel rate, denoted $\mathcal{L}^{C}(\cdot,\cdot)$.} We present our generalization in three pedagogical steps.

In general, $P(S_{1}^{l}\neq \hat{K}_{j}^{l})\leq\tau_{l,\delta}+\xi^{[l]}+g_{\rho,l}$, where the first two terms are as in (\ref{Eqn:DefnOfTauLDelta}), and $g_{\rho,l}$ is the probability that any of the decoders incorrectly decodes the $C_{U}-$codeword, conditioned on both encoders choosing the \textit{same} $C_{U}$ codeword.\footnote{For Ex.\ref{Ex:DueckExampleSlightlyModified}, $g_{\rho,l}=0$, and we ignored it. For general IC, $g_{\rho,l}$ is non-zero.} Our fixed B-L code $C_{U}$ will be a constant composition code, and in the statements of all theorems, $g_{\rho,l}$ is defined as\vspace{-0.1in}
\begin{equation}
\label{Eqn:DefnOfgrhol}
g_{\rho,l} \define \sum_{j=1}^{2}\exp\{ -l(E_{r}(A+\rho,p_{U},p_{Y_{j}|U})-\rho)\}.\nonumber\vspace{-0.1in}
\end{equation}
In all our theorems, $\mathcal{L}_{l}^{S}(\phi,|\mathcal{S}_{j}|)$ is defined as in (\ref{Eqn:AdditionalInformationToGoOnSatelliteChannels}), $g_{\rho,l}$ as above, $\phi=\tau_{l,\delta}(K_{1})+\xi^{[l]}(\ulineK)+g_{\rho,l}$ will serve as an upper bound on $P(S_{1}^{l}\neq \hat{K}_{j}^{l})$ that is less than $\frac{1}{2}$.
%
%
\subsection{Designing independent streams ignoring self-interference}
\label{SubSec:IgnoringSelfInterference}
The main challenges in generalizing pertain to 1) multiplexing a fixed B-L code with an infinite B-L code through a single channel input, and 2) the effect of erroneous conditional coding on the outer code. We adapt tools developed by Shirani and Pradhan \cite{201406ISIT_ShiPra-2-Shrt,201307ISIT_ChaSahPra-Shrt} in the context of distributed source coding. The following very simple generalization is chosen to illustrate our ideas. In particular, we live with self-interference between the two streams.
\begin{thm}
 \label{Thm:FirstStep}
$(\ulineCalS,\mathbb{W}_{\ulineS})$ is transmissible over IC $(\ulineCalX,\underline{\mathcal{Y}},\mathbb{W}_{\ulineY|\ulineX})$ if there exists (i) a finite set $\mathcal{K}$, maps $f_{j}:\mathcal{S}_{j}\rightarrow \mathcal{K}$, with $K_{j}=f_{j}(S_{j})$ for $j\in[2]$, (ii) $l \in \naturals, \delta > 0$, (iii) finite set $\mathcal{U},\mathcal{V}_{1},\mathcal{V}_{2}$ and pmf $p_{U}p_{V_{1}}p_{V_{2}}p_{X_{1}|UV_{1}}p_{X_{2}|UV_{2}}$ defined on $\mathcal{U}\times \underline{\mathcal{V}}\times \ulineCalX$, where $p_{U}$ is a type of sequences\footnote{Please refer to \cite[Defn 2.1]{CK-IT2011-Shrt}} in $\mathcal{U}^{l}$, (iv) $A,B \geq 0$, $\rho \in (0,A)$ such that $\phi \in [0,0.5)$, \vspace{-0.05in}
 \begin{eqnarray}
 \label{Eqn:TypicalSetSize}
A+B \geq (1+\delta)H(K_{1}),\mbox{ and for }j \in [2],\nonumber\\
B+H(S_{j}|K_{1}) + \mathcal{L}^{S}_{l}(\phi,|\mathcal{S}_{j}|) < I(V_{j};\Out_{j})-\mathcal{L}_{j}^{C}(\phi,|\mathcal{V}|),\nonumber
\end{eqnarray}
where, $\phi \define g_{\rho,l}+\xi^{[l]}(\ulineK)+\tau_{l,\delta}(K_{1})$, $\mathcal{L}_{j}^{C}(\phi,|\mathcal{U}|) = h_{b}(\phi)+\phi\log |\mathcal{U}|+|\mathcal{Y}||\mathcal{U}|\phi\log\frac{1}{\phi}$.
\end{thm}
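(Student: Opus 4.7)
The plan is to implement an inner-plus-outer architecture generalizing Section~\ref{SubSec:CodingForExample}, now over a genuinely noisy IC. The overall block length is $lm$, partitioned into $m$ sub-blocks of length $l$ with $m\to\infty$. Per sub-block $t$, the source word $K_j^l(t)$, when in $T_\delta^l(K_1)$, is enumerated in $l(A+B)$ bits split as $(b_j^A(t),b_j^B(t))$ with $|b_j^A(t)|=lA$ and $|b_j^B(t)|=lB$; this is feasible since $A+B\geq(1+\delta)H(K_1)$ upper-bounds $\log|T_\delta^l(K_1)|$. The piece $b_j^A(t)$ selects a codeword $u_j^l(t)$ from a constant-composition code $C_U$ of type $p_U$ and size $\lceil e^{lA}\rceil$. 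The outer layer is a random IID $V_j$-codebook of size $\lceil e^{m(I(V_j;Y_j)-\epsilon)}\rceil$ that carries the concatenated strings $b_j^B(1),\dots,b_j^B(m)$ together with a Slepian--Wolf bin index of $S_j^{lm}$ at rate $H(S_j|K_1)+\mathcal{L}_l^S(\phi,|\mathcal{S}_j|)+\epsilon$. The channel input $X_j^l(t)$ is generated symbol-by-symbol from $p_{X_j|UV_j}$ using $u_j^l(t)$ and the corresponding $V_j$-codeword coordinates; when $K_j^l(t)\notin T_\delta^l(K_1)$, encoder $j$ inputs an arbitrary $C_U$ codeword.

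Decoder $j$ first decodes its outer $V_j$-codeword from $Y_j^{lm}$ via joint typicality against the induced marginal $p_{V_j Y_j}$, thereby recovering $\hat b_j^B(1),\dots,\hat b_j^B(m)$ and the $S_j$-bin index. Then, per sub-block, it runs a maximum-mutual-information decoder on $Y_j^l(t)$ to obtain $\hat u_j^l(t)$, hence $\hat b_j^A(t)$, and combines with $\hat b_j^B(t)$ to get $\hat K_j^l(t)\in T_\delta^l(K_1)$; finally it resolves $S_j^{lm}$ from its bin using $\hat K_j^{lm}$ as side information. The per-sub-block event $\{\hat K_j^l\neq K_1^l\}$ decomposes by a union bound into (i) $\{K_1^l\notin T_\delta^l(K_1)\}$, bounded by $\tau_{l,\delta}(K_1)$; (ii) $\{K_1^l\neq K_2^l\}$, bounded by $\xi^{[l]}(\underline K)$; and (iii) the constant-composition inner-decoder error, bounded by $\exp\{-l(E_r(A+\rho,p_U,p_{Y_j|U})-\rho)\}$ via the Csisz\'ar--K\"orner random-coding exponent~\cite[Thm 10.2]{CK-IT2011-Shrt}, summing to $\phi$. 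The inequality chain (\ref{Eqn:BoundOnAddInformation})--(\ref{Eqn:AdditionalInformationToGoOnSatelliteChannels}) then shows the chosen Slepian--Wolf rate is enough to decode $S_j^{lm}$ whp as $m\to\infty$.

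The main obstacle is the outer-code step, which must absorb two flavors of ``self-interference.'' First, the outer $V_j$-code shares the physical input $X_j$ with the inner code; this is handled by the product-form factorization $p_U p_{V_j}$ that makes $V_j\perp U$, so the codebook is designed against the correct marginal $p_{V_j Y_j}$ and standard joint-typicality decoding succeeds at any rate strictly below $I(V_j;Y_j)$. Second, on the random $\phi$-fraction of sub-blocks in which $K_1^l\neq K_2^l$, or $K_1^l\notin T_\delta^l$, or the inner decoder fails, the two encoders' $u^l(t)$ differ and the actual sub-block statistics of $Y_j^l$ depart from the designed product form. Introducing the indicator vector $E\in\{0,1\}^m$ of such ``bad'' sub-blocks, concentrating $|E|\leq (\phi+o(1))m$ whp, and applying a Fano step on $E$ together with a uniform-continuity-of-entropy estimate for $H(Y_j|V_j)$ when a $\phi$-fraction of the $\mathcal{U}$-coordinates is perturbed arbitrarily, yields the loss $h_b(\phi)+\phi\log|\mathcal{U}|+|\mathcal{Y}||\mathcal{U}|\phi\log(1/\phi)=\mathcal{L}_j^C(\phi,|\mathcal{U}|)$, matching the slack in the stated rate inequality. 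I would make the interleaving between the fixed-$l$ inner layer and the $m\to\infty$ outer layer rigorous using the tools of~\cite{201406ISIT_ShiPra-2-Shrt,201307ISIT_ChaSahPra-Shrt}, which are already cited for exactly this purpose.
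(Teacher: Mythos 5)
Your architecture, the error decomposition $\phi=\tau_{l,\delta}(K_{1})+\xi^{[l]}(\ulineK)+g_{\rho,l}$, the source-coding analysis via (\ref{Eqn:BoundOnAddInformation})--(\ref{Eqn:AdditionalInformationToGoOnSatelliteChannels}), and the idea of charging a continuity penalty $\mathcal{L}_{j}^{C}$ for the sub-blocks on which the two encoders' $C_{U}$-codewords disagree all coincide with the paper's proof. The one place where your construction, as written, would fail is the outer code: you build a \emph{single} random $V_{j}$-codebook and decode it by joint typicality against $\prod p_{V_{j}Y_{j}}$. The channel that such a codebook experiences is not IID memoryless: within each sub-block $t$, the $l$ symbols of $\boldU_{j}(t,1:l)$ form one fixed codeword of $C_{U}$, so the interference that $V_{j}$ sees through $p_{X_{j}|UV_{j}}$ and $\mathbb{W}_{\ulineY|\ulineX}$ is correlated across the $l$ positions of a sub-block. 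This is precisely the obstruction the paper states before introducing interleaving (``if we build a single code $C_{V_{j}}$ of B-L $lm$ and multiplex it with $m$ blocks of $C_{U}$, then $C_{V_{j}}$ does not experience an IID memoryless channel'').

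The paper's resolution, which your closing sentence gestures at but your construction does not implement, is structural, not merely a matter of rigor: (i) draw independent uniform permutations $\pi_{t}:[l]\rightarrow[l]$; (ii) use the fact that $C_{U}$ is constant composition of type $p_{U}$ to conclude each interleaved column $\boldU_{j}^{\pi}(1:m,i)$ is IID $p_{U}$; (iii) split the outer message into $l$ parts and build $l$ \emph{independent} codes $C_{V_{ji}}$, each of block-length $m$, multiplexed along the $i$-th interleaved column. Only then does each outer code see a single-letter IID channel whose $(V_{j},Y_{j})$ marginal is within total variation $\phi$ of $p_{V_{j}Y_{j}}$, after which \cite[Proof of Lemma 2.7]{CK-IT2011-Shrt} yields $I(V_{j};Y_{j})-I(\hat{V}_{j};\hat{Y}_{j})\leq\mathcal{L}_{j}^{C}(\phi,|\mathcal{U}|)$. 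Your Fano-on-bad-sub-blocks route to the same penalty is an acceptable alternative to the paper's total-variation argument, but it does not remove the need for the column-wise code splitting: without it there is no IID single-letter channel to which either argument applies. (Minor additional point: your outer codebook size $\lceil e^{m(I(V_{j};Y_{j})-\epsilon)}\rceil$ over $lm$ channel uses is off by a factor of $l$; in the paper each of the $l$ column codes has $\sim e^{m(I(V_{j};Y_{j})-\mathcal{L}_{j}^{C}-\epsilon)}$ codewords over $m$ uses.)
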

\begin{remark}
 \label{Rem:Single-LetterCharacterization}
 The characterization provided in Thm. \ref{Thm:FirstStep} (and those in Thm. \ref{Thm:SecondStep}, \ref{Thm:ThirdStep})) is via S-L PMFs and S-L expressions.
\end{remark}

\begin{remark}
 \label{Rem:ExplainThm}
In the above, the fixed B-L code operates over $K_{j}^{l}$ instead of $S_{j}^{l}$. $\mathcal{L}_{j}^{C}(\phi,|\mathcal{U}|)$ quantifies the loss in rate of the outer code due to erroneous conditional coding. Note that, in Ex. \ref{Ex:DueckExampleSlightlyModified}, satellite channel remained unaffected when the encoders placed different $C_{U}$ codewords. The latter events imply, the $V_{j}-Y_{j}$ channel is not $p_{Y_{j}|V_{j}}$. $\mathcal{L}_{j}^{C}(\phi,|\mathcal{U}|)$ is a bound on the difference in the mutual information between $p_{Y_{j}|V_{j}}$ and the actual channel. Note that $\mathcal{L}_{j}^{C}(\phi,|\mathcal{U}|) \rightarrow 0$ as $\phi \rightarrow 0$.
\end{remark}
\begin{proof}
We elaborate on the new elements. The rest follows from standard arguments \cite{201701arXiv_Pad}. The source-coding module, and the mappings to the channel-coding module are identical to Section \ref{SubSec:CodingForExample}. We describe the structure of $C_{U}$ and how it is multiplexed with the outer code built on $\mathcal{V}_{j}$. If we build a single code $C_{V_{j}}$ of B-L $lm$ and multiplex it with $m$ blocks of $C_{U}$, then $C_{V_{j}}$ does \textit{not} experience an IID memoryless channel.

Let $\boldU_{j}(t,1:l)$ denote encoder $j$'s chosen codeword from $C_{U}$ corresponding to the $t^{\mbox{{\small th}}}$ sub-block of  $\boldK_{j}(t,1:l)$. We seek to identify sub-vectors of $\boldU_{j}$ that are IID, and whose pmf we know. We can then multiplex the outer code along these sub-vectors. \textit{Interleaving} \cite{201307ISIT_ChaSahPra-Shrt} enables us do this. 

Suppose, for $t \in [m]$, $\bold{A}(t,1:l) \sim p_{A^{l}}$ and the $m$ vectors $\bold{A}(1,1:l),\cdots,\bold{A}(m,1:l)$ are iid $p_{A^{l}}$. Let $\pi_{t}:[l] \rightarrow [l]:t\in [m]$ be $m$ surjective maps, that are independent and uniformly chosen among the collection of surjective maps (permuters). Then, for each $i \in [l]$, the $m-$length vector\vspace{-0.1in}
\[
\bold{A}(1,\pi_{1}(i)),\bold{A}(2,\pi_{2}(i)), \cdots, \bold{A}(m,\pi_{m}(i))\sim \prod_{t=1}^{m}\{ \frac{1}{l}\sum_{i=1}^{l}p_{A_{i}}\}.
\]
\cite[Appendix A]{201601arXiv_Pad} contains a proof. The following notation will ease exposition. For $\bold{A} \in \mathcal{A}^{m \times l}$, and a collection $\pi_{t}:[l] \rightarrow [l]: t \in [m]$ of surjective maps, we let $\bold{A}^{\pi} \in \mathcal{A}^{ m\times l}$ be such that $\bold{A}^{\pi}(t,i) \define \bold{A}(t,\pi_{t}(i))$ for each $(t,i)\in [m]\times [l]$. The above fact can be therefore be stated as $\bold{A}^{\pi}(1:m,i) \sim \prod_{t=1}^{m}\{ \frac{1}{l}\sum_{i=1}^{l}p_{A_{i}}\}$. 

One can now easily prove that, if $C_{U}$ is a \textit{constant composition code} of type $p_{U}$, and $m$ codewords are independently chosen\footnote{Not necessarily uniformly. In fact the index output from the source code, owing to the fixed block-length $l$ is not necessarily uniform} from $C_{U}$ and placed as rows of $\boldU_{j}$, then for any $i \in [l]$, the \textit{interleaved} vector $\bold{U}^{\pi}(1:m,i) \sim \prod_{t=1}^{m}p_{U}$. 
We now build $l-$codebooks (independently drawn), one for each of these \textit{interleaved} vectors.

Following is our channel code structure. $C_{U}$ is constant composition code of type $p_{U}$ and B-L $l$. Encoder $j$ picks $l$ independent codes $C_{V_{ji}}: i \in [l]$, each iid $\sim \prod_{t=1}^{m}p_{V_{j}}(\cdot)$, each B-L $m$. $C_{V_{ji}}$ is multiplexed along with sub-vector $\boldU_{j}^{\pi}(1:m,i)$. Outer code message is split into $l$ equal parts $(M_{j1},\cdots, M_{jl})$. $\boldV_{j} \in \mathcal{V}^{m \times l}$ is defined as $\boldV_{j}^{\pi}(1:m,i) = C_{V_{ji}}(M_{ji})$. For each $(t,i) \in [m] \times [l]$, $\boldX_{j}(t,i)$ is chosen IID wrt $p_{X_{j}|UV_{j}}(\cdot|\boldU_{j}(t,i),\boldV_{j}(t,i))$. Symbols in $\boldX_{j}\in \mathcal{X}_{j}^{m \times l}$ are input on the channel. It can be verified that (1) the codewords of $C_{U}$ and $C_{V_{ji}}$ pass through an IID memoryless channels whose transition probabilities are `characterized' in the sequel.
 
Since each codebook $C_{V_{ji}}$ and each codeword is IID, $\boldU_{j}(t,1:l)\rightarrow \boldY_{j}(t,1:l)$ is IID $p_{Y_{j}|U}-$channel. Interleaving ensures $\boldV_{j}^{\pi}(1:m,i)\rightarrow \boldY_{j}^{\pi}(1:m,i)$ is IID. But, unless $\boldU_{1}=\boldU_{2}$, we are not guaranteed the latter channel is $\prod p_{Y_{j}|V_{j}}$.\footnote{We are unaware of the transition probabilities of this IID PTP.} In fact, we only know certain marginals of 
\begin{equation}\scalebox{0.89}{$
{\boldU^{\pi}_{j}(1:m,i),\boldV^{\pi}_{j}(1:m,i),\boldX_{j}^{\pi}(1:m,i),\boldY_{j}^{\pi}(1:m,i)~:~ j \in [2]}$} \nonumber
\end{equation}
Let $(\boldV_{j}^{\pi}(1:m,i), \boldY_{j}^{\pi}(1:m,i)) \sim \prod p_{\hat{V}_{j}\hat{Y}_{j}}$, where $p_{\hat{V}_{j}}=p_{V_{j}}$. We wish to bound the difference $I(V_{j};Y_{j})-I(\hat{V}_{j};\hat{Y}_{j})$ from above. Using the relations $p_{\hat{U}_{j}}=p_{U}$, $p_{\hat{V}_{j}\hat{Y}_{j}|\hat{U}_{1}\hat{U}_{2}}(v,y|u,u)=p_{V_{j}Y_{j}|U}(v,y|u)$, where $\prod p_{\hat{U}_{1}\hat{U}_{2}}$ is the pmf of the interleaved vector $\boldU_{1}^{\pi}(1:m,i)\boldU_{2}^{\pi}(1:m,i)$, $\sum_{u_{1}\neq u_{2}}p_{\hat{U}_{1}\hat{U}_{2}}\leq \phi$, we can prove $|p_{V_{j}Y_{j}}(v,y)-p_{\hat{V}_{j}\hat{Y}_{j}}(v,y)|\leq \phi$. These steps are analogous to those in \cite[Appendix B]{201601arXiv_Pad}. Using \cite[Proof of Lemma 2.7]{CK-IT2011-Shrt}, we conclude $I(V_{j};Y_{j})-I(\hat{V}_{j};\hat{Y}_{j}) \leq \mathcal{L}_{C}(\phi,|\mathcal{U}|)$. We refer to reader to \cite{201701arXiv_Pad} for the rest of the proof which is quite standard.
\end{proof}

\begin{lemma}
 \label{Lem:StrictWeakerConditionsThanLC}
 The conditions stated in Thm. \ref{Thm:FirstStep} are strictly weaker than those stated in \cite[Thm. 1]{201112TIT_LiuChe-Shrt}.
\end{lemma}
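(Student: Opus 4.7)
The lemma asserts that the transmissibility region enabled by the conditions of Theorem \ref{Thm:FirstStep} strictly contains the one enabled by LC. My plan proves this by (a) establishing the set-theoretic containment, and (b) exhibiting an instance satisfying Theorem \ref{Thm:FirstStep} but not LC. Part (b) is the substantive half, for which Example \ref{Ex:DueckExampleSlightlyModified} is the natural witness: Lemma \ref{Lem:ExDoesNOTSatisfyLCConditions} already rules out LC for this example, so it remains only to verify that Theorem \ref{Thm:FirstStep}'s hypotheses are met.

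For the verification underlying (b), my plan is to recast Section \ref{SubSec:CodingForExample}'s scheme in the language of Theorem \ref{Thm:FirstStep}. Concretely, set $\mathcal{K} = \mathcal{S}_{1}$ with identity maps $f_{1}, f_{2}$ (so $\underline{K} = \underline{S}$ and $\xi^{[l]}(\underline{K}) = \xi^{[l]}(\underline{S}) \leq l/(ka^{\eta k})$), $\mathcal{U} = \{0,\ldots,a-1\}$ with $p_{U}$ uniform (valid as a type on $\mathcal{U}^{l}$ since $a \mid l$), $\mathcal{V}_{j} = \mathcal{X}_{j}$ with $p_{V_{j}}$ capacity-achieving on $\mathbb{W}_{Y_{j}\mid X_{j}}$, and $p_{X_{j}\mid U V_{j}}(x \mid u,v) = \mathds{1}_{\{x=v\}}$ so that the satellite channel is driven by $V_{j}$ alone. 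Pick $l = k^{4} a^{\eta k/2}$, $\delta = 1/k$, $A = \lfloor \log a \rfloor$, $\rho \in (0,A)$ small enough that $g_{\rho,l}$ vanishes exponentially in $l$ (this holds because the $\mathbb{W}_{Y_{0}\mid U}$-channel has capacity $\log a$ under the uniform input, so the random coding exponent $E_{r}(A+\rho,p_{U},p_{Y_{j}\mid U})$ is strictly positive below it), and $B$ chosen to saturate (\ref{Eqn:UpperBoundOnB}). The estimates (\ref{Eqn:DefnOfTauLDelta})--(\ref{Eqn:UpperBoundOnB}) immediately yield $\phi \leq 2k^{3}a^{-\eta k/2} < 1/2$, together with negligibly small $\mathcal{L}^{S}_{l}$ and $\mathcal{L}^{C}_{j}$ (using that $\mathcal{L}^{C}_{j}(\phi,|\mathcal{U}|)\to 0$ as $\phi \to 0$, per Remark \ref{Rem:ExplainThm}), so that both Theorem \ref{Thm:FirstStep} inequalities, $A+B \geq (1+\delta)H(K_{1})$ and $B+H(S_{j}\mid K_{1})+\mathcal{L}^{S}_{l} < I(V_{j};Y_{j}) - \mathcal{L}^{C}_{j}$, follow for sufficiently large $a,k$ from exactly the calculation in Section \ref{SubSec:CodingForExample}.

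For the containment direction (a), my plan is to specialize Theorem \ref{Thm:FirstStep} to the trivial-$\mathcal{K}$ regime: take $|\mathcal{K}| = 1$, $A=0$, $\rho \to 0^{+}$, $l \to \infty$, driving $\phi$, $\mathcal{L}^{S}_{l}$, $\mathcal{L}^{C}_{j}$ to zero, leaving the single-letter inequality $B + H(S_{j}) < I(V_{j};Y_{j})$. To match LC's conditions, I would fold LC's auxiliaries $(Q, W_{j}, U_{j})$ into a larger $V_{j}$ alphabet, with a source-binning layer beneath Theorem \ref{Thm:FirstStep}'s separation structure carrying the $H(S_{j}\mid W_{j}, Q)$ slack. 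The main obstacle I anticipate is the inclusion direction, since Theorem \ref{Thm:FirstStep} restricts $V_{j}$ to be independent of $S_{j}$ while LC permits $U_{j}$ to depend on $S_{j}$; the standard workaround---letting $V_{j}$ carry the bin index rather than a direct function of $S_{j}$, exploiting the product structure $p_{U}p_{V_{1}}p_{V_{2}}$ against LC's $p_{Q}p_{W_{1}\mid Q}p_{W_{2}\mid Q}$---should reconcile this, but spelling out the equivalence rigorously is the genuinely non-trivial step, whereas the strict-inclusion witness follows cleanly from the Section \ref{SubSec:CodingForExample} calculations.
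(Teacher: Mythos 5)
Your part (b) coincides with the paper's entire proof: the paper establishes the lemma solely by exhibiting Ex.~\ref{Ex:DueckExampleSlightlyModified} as a witness --- choosing $\mathcal{K}=\mathcal{S}_{j}$, $f_{j}$ the identity, $\mathcal{V}_{j}=\mathcal{X}_{j}$, $p_{U}$ uniform, $p_{V_{j}}$ capacity-achieving, $\delta=1/k$, $l=k^{4}a^{\eta k/2}$ --- and then invoking Lemma~\ref{Lem:ExDoesNOTSatisfyLCConditions}; the numerical verification is exactly the calculation of Section~\ref{SubSec:CodingForExample}. One concrete slip in your instantiation: you take $A=\lfloor\log a\rfloor$ and assert that a small $\rho$ makes $g_{\rho,l}$ vanish because the exponent is ``strictly positive below capacity.'' But the common channel $\mathbb{W}_{Y_{0}|\underline{U}}$ is noiseless with capacity exactly $\log a$ under the uniform input, so $E_{r}(A+\rho,p_{U},p_{Y_{j}|U})\leq |\log a-A-\rho|^{+}$; when $\log a$ is an integer this is zero for every $\rho>0$ and $g_{\rho,l}=2\exp\{l\rho\}$ diverges, and in any case you need $E_{r}(A+\rho,\cdot)>\rho$, not merely $E_{r}>0$. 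The paper backs $A$ off to $(1-k^{-3})\log a$ precisely to leave this margin; your choice needs the same correction.

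Your part (a) is not in the paper and, as you yourself suspect, does not go through in the form sketched. Theorem~\ref{Thm:FirstStep} imposes the product structure $p_{U}p_{V_{1}}p_{V_{2}}p_{X_{1}|UV_{1}}p_{X_{2}|UV_{2}}$ with the outer-code inputs independent of the sources (a pure separation architecture), whereas LC permits source-dependent channel inputs via $p_{X_{j}U_{j}|S_{j}W_{j}Q}$; there is no reason the former region contains the latter, and the paper nowhere claims such a containment --- it reads ``strictly weaker'' as asserting only the existence of an instance satisfying Thm.~\ref{Thm:FirstStep} but violating LC, which is what Lemma~\ref{Lem:ExDoesNOTSatisfyLCConditions} plus your part (b) delivers. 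So drop part (a) (or, if you insist on a genuine two-sided comparison, recognize that it would have to be made against the full scheme culminating in Thm.~\ref{Thm:ThirdStep}, not Thm.~\ref{Thm:FirstStep}), fix the choice of $A$, and your argument is the paper's.
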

\begin{proof}
Ex. \ref{Ex:DueckExampleSlightlyModified}, with $a,k$ chosen sufficiently large, satisfies conditions stated in Thm \ref{Thm:FirstStep}. In particular, choose $\delta=\frac{1}{k}, \rho=1,A=(1-\frac{1}{k^{3}})\log a, B=H(S_{1})-A,l=k^{4}a^{\frac{\eta k}{2}},\mathcal{K}=\mathcal{S}_{j}$, $f_{j}=$ identity, $\mathcal{V}_{j}=\mathcal{X}_{j}, p_{U}\mbox{ uniform}, p_{V_{j}}\mbox{ capacity achieving}$. The result now follows from Lemma \ref{Lem:ExDoesNOTSatisfyLCConditions}.
\end{proof}

\subsection{Additional information via Message-Splitting}
\label{SubSec:SuperpositionCodingOverfB-LCoding}
We now employ Han-Kobayashi technique to communicate the rest of the information (LHS of (\ref{Eqn:ConditionCentralToSeparation})). Towards that end, let $\mathscr{HK}(p_{V_{1}W_{1}X_{1}}p_{V_{2}W_{2}X_{2}})$ be the Han-Kobayashi inner bound defined in \cite[Proposition 3]{201112TIT_LiuChe-Shrt}.
\begin{thm}
 \label{Thm:SecondStep}
$(\ulineCalS,\mathbb{W}_{\ulineS})$ is transmissible over IC $\mathbb{W}_{\ulineY|\ulineX}$ if there exists (i) a finite set $\mathcal{K}$, maps $f_{j}:\mathcal{S}_{j}\rightarrow \mathcal{K}$, with $K_{j}=f_{j}(S_{j})$ for $j\in[2]$, (ii) $l \in \naturals, \delta > 0$, (iii) finite set $\mathcal{U},\mathcal{V}_{j},\mathcal{W}_{j}:j \in [2]$ and pmf $p_{U}p_{V_{1}W_{1}}p_{V_{2}W_{2}}p_{X_{1}|UV_{1}W_{1}}p_{X_{2}|UV_{2}W_{2}}$ defined on $\mathcal{U}\times \underline{\mathcal{V}}\times \underline{W}\times \ulineCalX$, where $p_{U}$ is a type of sequences in $\mathcal{U}^{l}$, (iv) $A,B \geq 0$, $\rho \in (0,A)$ such that $\phi \in [0,0.5)$, \vspace{-0.1in}
 \begin{eqnarray}
 \label{Eqn:TypicalSetSize}
A+B \geq (1+\delta)H(K_{1}),\mbox{ and for }j \in [2],\nonumber\\
\left(\substack{B+\mathcal{L}^{C}(\phi,|\underline{\mathcal{U}\mathcal{V}\mathcal{W}}|)\\ H(S_{j}|K_{1}) +\mathcal{L}^{S}_{l}(\phi,|\mathcal{S}_{j}|)}:j\in[2]\right) \in \mathscr{HK}(p_{V_{1}W_{1}X_{1}}p_{V_{2}W_{2}X_{2}}),\nonumber\\
\mathcal{L}_{j}^{C}(\phi,|\mathcal{U}|) = h_{b}(\phi)+5\phi\log |\underline{\mathcal{U}\mathcal{V}\mathcal{W}}|+||\underline{\mathcal{U}\mathcal{V}\mathcal{W}}|^{3}\phi\log\frac{1}{\phi}. \nonumber\vspace{-.9in}
\end{eqnarray}
where, $\phi,\mathcal{L}^{\scriptsize S}_{l}(\phi,|\mathcal{S}_{j}|)$ are as defined in Thm \ref{Thm:FirstStep}.
\end{thm}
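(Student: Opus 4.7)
The proof will build directly on the architecture of Theorem~\ref{Thm:FirstStep}, replacing the single private outer code on $\mathcal{V}_j$ with a full Han--Kobayashi layered construction on $(\mathcal{W}_j,\mathcal{V}_j)$, while retaining the fixed B-L conditional coding on $\mathcal{U}$ exactly as in Section~\ref{SubSec:CodingForExample}. The plan is to inherit from the proof of Theorem~\ref{Thm:FirstStep}: (i) the source-coding layer on $K_j^l$ together with the outer binning of $S_j^{lm}$ at rate $H(S_j|K_1)+\mathcal{L}_l^S(\phi,|\mathcal{S}_j|)$; (ii) the constant-composition inner code $C_U$ of type $p_U$ and block length $l$, and the bound $P(\hat{K}_j^l\neq K_1^l)\leq \phi=\tau_{l,\delta}(K_1)+\xi^{[l]}(\underline K)+g_{\rho,l}$; (iii) the interleaving device, with i.i.d.\ uniform permuters $\pi_t:[l]\to[l]$, that renders $\boldU^{\pi}_j(1\colon\! m,i)$ i.i.d.\ $p_U$ for every column $i$. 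None of these steps has to be re-derived.

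The new ingredient is the outer code. Following \cite[Proposition 3]{201112TIT_LiuChe-Shrt}, I would split the remaining information of encoder $j$ into a common part (rate $T_j^w$) and a private part (rate $T_j^v$), with $T_j^w+T_j^v=B+H(S_j|K_1)+\mathcal{L}^S_l(\phi,|\mathcal{S}_j|)$. For each interleaved column $i\in[l]$, I would draw an independent pair of codebooks $C_{W_{ji}},C_{V_{ji}}$ of block length $m$, i.i.d.\ $\prod p_{W_j}$ and (conditionally) $\prod p_{V_j|W_j}$ respectively, and superpose them along $\boldU_j^{\pi}(1\colon\! m,i)$ via the kernel $p_{X_j|UW_jV_j}$. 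The message parts $(M_j^w,M_j^v)$ are split equally across the $l$ interleaved columns. Decoding proceeds column by column: decoder $j$ first recovers $\boldU_j(t,1:l)$ from $\boldY_j(t,1:l)$ by ML-type decoding of $C_U$ (with exponent $E_r(A+\rho,p_U,p_{Y_j|U})$), then for each column $i$ jointly decodes $(M_{j i}^w,M_{\bar j i}^w,M_{j i}^v)$ from $\boldY_j^{\pi}(1\colon\! m,i)$ in the usual Han--Kobayashi manner.

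The first technical step is to argue that each interleaved column sees an i.i.d.\ memoryless super-channel. By the interleaving lemma (invoked via \cite[Appendix A]{201601arXiv_Pad}), the joint law of $(\boldU_1^{\pi},\boldU_2^{\pi},\boldW_1^{\pi},\boldW_2^{\pi},\boldV_1^{\pi},\boldV_2^{\pi},\boldX_1^{\pi},\boldX_2^{\pi},\boldY_j^{\pi})(1\colon\! m,i)$ is i.i.d.\ under some single-letter law $p_{\hat U_1\hat U_2\hat W_1\hat W_2\hat V_1\hat V_2\hat X_1\hat X_2\hat Y_j}$, whose marginals on $\hat U_j,\hat W_j,\hat V_j$ agree with $p_U,p_{W_j},p_{V_j}$ and which, conditioned on $\{\hat U_1=\hat U_2\}$, coincides with $p_{UW_1V_1X_1W_2V_2X_2Y_j}$. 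Since $\sum_{u_1\neq u_2}p_{\hat U_1\hat U_2}\leq \phi$, the total-variation distance between the actual joint law on $(\hat U_j,\hat W_1,\hat W_2,\hat V_j,\hat Y_j)$ and the target law $p_{UW_1W_2V_jY_j}$ is at most $\phi$ (term by term). Applying the continuity bound \cite[Proof of Lemma 2.7]{CK-IT2011-Shrt} to every mutual-information quantity that appears on the RHS of the HK constraints (there are at most five such joint alphabets, each a subset of $\underline{\mathcal{UVW}}\times\mathcal{Y}_j$), each single-letter mutual information $I(\,\cdot\,;\hat Y_j|\cdot)$ in the actual system differs from its target value by at most $h_b(\phi)+5\phi\log|\underline{\mathcal{UVW}}|+|\underline{\mathcal{UVW}}|^3\phi\log\tfrac{1}{\phi}=\mathcal{L}_j^C(\phi,|\mathcal{U}|)$. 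Hence the effective HK region for the super-channels contains every rate pair in $\mathscr{HK}(p_{V_1W_1X_1}p_{V_2W_2X_2})$ shrunk inward by $\mathcal{L}^C(\phi,|\underline{\mathcal{UVW}}|)$ in each constraint.

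The remaining step is a routine union-bound argument. The probability of error decomposes into (a) atypicality of $K_1^l$ or $K_2^l$, bounded by $2\tau_{l,\delta}(K_1)$; (b) $C_U$ decoding error, bounded by $g_{\rho,l}$ by the random-coding exponent; (c) HK decoding error on each of the $l$ interleaved columns, vanishing in $m$ as long as the per-column rates lie strictly inside the shrunk HK region; (d) source-bin decoding error on $S_j^{lm}$, vanishing in $m$ by the bound $H(S_j^l|\hat K_j^l)\leq l[H(S_j|K_1)+\mathcal{L}_l^S(\phi,|\mathcal{S}_j|)]$ derived in (\ref{Eqn:BoundOnAddInformation})--(\ref{Eqn:AdditionalInformationToGoOnSatelliteChannels}). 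Picking $m\to\infty$ after $l,\delta,\rho$ are fixed drives the aggregate error below any prescribed threshold. The main obstacle, and the only place where care is needed beyond Theorem~\ref{Thm:FirstStep}, is item (c): one has to verify that the column-wise HK super-channel really admits a single-letter characterization with the correct $V_j,W_j$ marginals and then push the $\phi$-perturbation uniformly through \emph{every} mutual-information term in the HK region (including the sum-rate and the common-message constraints), which is why the constant in $\mathcal{L}_j^C$ grows from the $|\mathcal{Y}||\mathcal{U}|$ factor of Theorem~\ref{Thm:FirstStep} to the $|\underline{\mathcal{UVW}}|^3$ factor here. The full remaining bookkeeping is deferred to \cite{201701arXiv_Pad}.
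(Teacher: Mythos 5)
Your proposal is correct and follows essentially the same route the paper takes: it reuses the fixed B-L code, interleaving, and $\phi$-perturbation machinery from the proof of Theorem~\ref{Thm:FirstStep}, layers a per-column Han--Kobayashi split on top, and applies a uniform continuity bound to every mutual-information term in the HK region --- which is exactly what Remark~\ref{Rem:HanKobayashiAchievability} identifies as the only new ingredient (and the reason for the enlarged constant in $\mathcal{L}_{j}^{C}$). The paper itself defers the remaining bookkeeping to \cite{201701arXiv_Pad}, just as you do.
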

\begin{remark}
 \label{Rem:HanKobayashiAchievability}For simplicity and compact description, we derive a uniform upper bound on all the mutual-information quantities involved in the description of the Han-Kobayashi region. This explains the large constant multiplying $\phi\log\frac{1}{\phi}$.
\end{remark}
Our third step is to use the decoded fixed B-L channel codewords towards conditional decoding of the outer code. The outer code is built on $\mathcal{X}_{j}$ and is superimposed on (interleaved vectors of) $C_{U}$. The challenge is that a fraction $\phi$ of the decoded codewords are erroneous. The approach is to treat the interleaved columns of the decoded $\hat{\boldU}$ as a noisy state/side information. Interleaving ensures that these sub-vectors have a S-L IID pmf. Proof is similar to \cite[Proof of Thm. 1]{201601arXiv_Pad}.

\begin{thm}
 \label{Thm:ThirdStep}
$(\ulineCalS,\mathbb{W}_{\ulineS})$ is transmissible over an IC $\mathbb{W}_{\ulineY|\ulineX}$ if there exists (i) a finite set $\mathcal{K}$, maps $f_{j}:\mathcal{S}_{j}\rightarrow \mathcal{K}$, with $K_{j}=f_{j}(S_{j})$ for $j\in[2]$, (ii) $l \in \naturals, \delta > 0$, (iii) finite set $\mathcal{U}$ and pmf $p_{U}p_{X_{1}|U}p_{X_{2}|U}$ defined on $\mathcal{U}\times \ulineCalX$, where $p_{U}$ is a type of sequences in $\mathcal{U}^{l}$, (iv) $A,B \geq 0$, $\rho \in (0,A)$ such that $\phi \in [0,0.5)$, where
 \begin{eqnarray}
 \label{Eqn:TypicalSetSize}
A+B \geq (1+\delta)H(K_{1}),\mbox{ and for }j \in [2],\nonumber\\
B+H(S_{j}|K_{1}) + \mathcal{L}^{S}(\phi,|\mathcal{K}|) < I(X_{j};\Out_{j}| U)-\mathcal{L}_{j}^{C}(\phi,|\mathcal{U}|)\nonumber\\\mathcal{L}_{j}^{C}(\phi,|\mathcal{U}|) = h_{b}(\phi)+\phi\log |\mathcal{U}|+|\mathcal{X}_{j}||\mathcal{Y}||\mathcal{U}|(1+|\mathcal{X}_{\msout{j}}|)\phi\log\frac{1}{\phi} \nonumber
 \end{eqnarray} and $\phi$, $\mathcal{L}^{\scriptsize S}(\phi,|\mathcal{K}|)$ are as defined in Thm \ref{Thm:FirstStep}.
\end{thm}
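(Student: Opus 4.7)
The plan is to adapt the architecture of Thm.~\ref{Thm:FirstStep} by replacing the \emph{independent} outer codebook on $\mathcal{V}_{j}$ with a \emph{superposition} outer codebook drawn directly on $\mathcal{X}_{j}$ conditioned on the interleaved inner codeword $\boldU_{j}$. Decoder~$j$ will first decode the fixed B-L inner codewords sub-block by sub-block, de-interleave, and then decode the outer codeword while treating the decoded $\hat{\boldU}_{j}$ as (noisy) side information. This is what promotes the right-hand side of the channel-rate condition from $I(V_{j};Y_{j})$ (as in Thm.~\ref{Thm:FirstStep}) to $I(X_{j};Y_{j}|U)$.

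The source and inner coding module is carried over verbatim from Sec.~\ref{SubSec:CodingForExample}: both encoders use the same typical-set source code $T_{\delta}^{l}(K_{1})$, whose index splits into $lA$ bits injectively mapped into the type-$p_{U}$ constant-composition code $C_{U}$ and $lB$ bits to be carried by the outer code. Stack $m$ source sub-blocks as rows of $\boldK_{j},\boldU_{j}\in\mathcal{U}^{m\times l}$ and apply $m$ independent uniformly drawn permutations $\pi_{t}:[l]\to[l]$; by the interleaving lemma \cite[Appendix A]{201601arXiv_Pad}, $\boldU_{j}^{\pi}(1{:}m,i)\sim\prod_{t=1}^{m}p_{U}$ for each $i\in[l]$. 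For each $i\in[l]$, draw an independent superposition codebook $C_{X_{ji}}$ of block-length $m$ whose $t$-th entry, given message $M_{ji}$ and the realized $\boldU_{j}^{\pi}(t,i)$, is i.i.d.\ $p_{X_{j}|U}(\cdot\mid\boldU_{j}^{\pi}(t,i))$. Split the outer message into $l$ equal parts $(M_{j1},\ldots,M_{jl})$, set $\boldX_{j}^{\pi}(1{:}m,i)=C_{X_{ji}}(M_{ji})$ column by column, and transmit $\boldX_{j}$ obtained by undoing $\pi$.

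Decoder~$j$ first performs constant-composition decoding of each of the $m$ inner $C_{U}$-sub-blocks from $\boldY_{j}(t,1{:}l)$; by the random-coding exponent bound this errs on a given sub-block with probability at most $g_{\rho,l}$. Combining with the source-non-typicality probability $\tau_{l,\delta}(K_{1})$ and the near-GKW disagreement $\xi^{[l]}(\ulineK)$, the joint `good event' $\{\hat{\boldU}_{j}(t,1{:}l)=\boldU_{1}(t,1{:}l)=\boldU_{2}(t,1{:}l)\}$ occurs with probability at least $1-\phi$. After re-interleaving, for each $i\in[l]$ the collection $\{(\boldU^{\pi}_{1}(t,i),\boldU^{\pi}_{2}(t,i),\hat{\boldU}^{\pi}_{j}(t,i),\boldX^{\pi}_{j}(t,i),\boldY^{\pi}_{j}(t,i)):t\in[m]\}$ is i.i.d.\ with a single-letter joint pmf whose mismatch event has mass at most $\phi$. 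Decoder~$j$ then performs conditional joint-typicality decoding of $M_{ji}$ using $\hat{\boldU}_{j}^{\pi}(1{:}m,i)$ as the side-information column and $\boldY_{j}^{\pi}(1{:}m,i)$ as the received column.

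The hard step is bounding the rate loss incurred because (i) the effective channel seen by the outer code is not $p_{Y_{j}|X_{j}U}$ --- whenever $U_{1}\neq U_{2}$, $Y_{j}$ is also influenced by $X_{\msout{j}}$ --- and (ii) the decoder conditions on $\hat{U}_{j}$ rather than on the true $U$. Letting $p_{\hat{U}_{j}\hat{X}_{j}\hat{Y}_{j}}$ denote the single-letter marginal of the interleaved tuple, which agrees with the nominal $p_{UX_{j}Y_{j}}$ on the good event, one obtains $\|p_{UX_{j}Y_{j}}-p_{\hat{U}_{j}\hat{X}_{j}\hat{Y}_{j}}\|_{1}\leq(1+|\mathcal{X}_{\msout{j}}|)\phi$, where the $|\mathcal{X}_{\msout{j}}|$ factor is the cost of marginalizing the other encoder's input under the mismatch event. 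Invoking \cite[Proof of Lemma~2.7]{CK-IT2011-Shrt} on $I(X_{j};Y_{j}|U)=H(Y_{j}|U)-H(Y_{j}|X_{j}U)$ then yields $|I(X_{j};Y_{j}|U)-I(\hat{X}_{j};\hat{Y}_{j}|\hat{U}_{j})|\leq\mathcal{L}_{j}^{C}(\phi,|\mathcal{U}|)$, which is exactly the formula claimed. Combined with the source-side loss $\mathcal{L}_{l}^{S}(\phi,|\mathcal{K}|)$ for compensating inner-decoding errors (as in (\ref{Eqn:AdditionalInformationToGoOnSatelliteChannels})) and the refinement rate $B+H(S_{j}|K_{1})$ the outer code must carry, the final sufficient condition matches the statement. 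Standard averaging over the random codebooks and permutations, as in \cite[Proof of Thm.~1]{201601arXiv_Pad}, then completes the argument.
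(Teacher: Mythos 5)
Your proposal follows essentially the same route as the paper's (sketched) argument for this theorem: superposition of the outer code on the interleaved columns of $C_{U}$, conditional decoding that treats the decoded $\hat{\boldU}_{j}$ as noisy side information, interleaving to obtain i.i.d.\ single-letter columns, and a variational-distance plus continuity-of-entropy bound yielding the loss term $\mathcal{L}_{j}^{C}(\phi,|\mathcal{U}|)$. The level of detail you supply matches or exceeds the paper's, which itself defers to the proof of Thm.~1 of \cite{201601arXiv_Pad}, so no further comparison is needed.
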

The final step in our generalization will combine the techniques of Thm. \ref{Thm:SecondStep}, \ref{Thm:ThirdStep}. In particular, we employ Han-Kobayashi technique in the superposition layer. The message to be communicated through the outer code is split into private and public parts and coded using separate codebooks. Decoder $j$ uses the decoded fixed B-L channel codeword and employs a conditional Han-Kobayashi decoder. We omit a characterization in the interest of brevity.


\bibliographystyle{../sty/IEEEtran}
{
\bibliography{../wislBib/wisl}
\end{document}

Ertem Tuncel

Paulo Minero, Denis Gunduz, Y-H Kim, Lapidoth, 

Y. Murin, R. Dabora and D. Gunduz, On joint source-channel coding for correlated sources over multiple-access relay channels, IEEE Trans. Information Theory, vol. 60, no. 10, pp. 6231–6253, Oct. 2014.

P. Minero, S.H. Lim, and Y.-H. Kim, ``A Unified Approach to Hybrid Coding'', IEEE Trans. on Information Theory, vol.61, no.4, pp.1509--1523, Apr. 2015.

A. Lapidoth and S. Tinguely, "Sending a bivariate Gaussian over a Gaussian MAC" IEEE Transactions on Information Theory, IT-56, No. 6, pp. 2714-2752, June 2010.

http://ieeexplore.ieee.org/document/7541654/

The achievable distortion region of sending a bivariate Gaussian source on the Gaussian broadcast channel
C. Tian, S. Diggavi and S. Shamai, IEEE Trans. Inform. Theory, Vol. 57, No. 10, pp. 6419-6427, Oct. 2011.

K. Khezeli and J. Chen
"A Source-Channel Separation Theorem With Application to the Source Broadcast Problem" ( pdf )
IEEE Transactions on Information Theory, vol. 62, pp. 1764-1781, Apr. 2016

L. Song, J. Chen, and C. Tian
"Broadcasting Correlated Vector Gaussians" ( pdf )
IEEE Transactions on Information Theory, vol. 61, pp. 2465-2477, May 2015

Correlated Gaussian Sources over Gaussian Weak Interference Channels

Our work takes an interesting/positive viewpoint. Can we obtain a S-L innerbound to the performance of an $l-$letter technique that is strictly larger than the current known best (which is purely based on a S-L coding technique), and that is computable? We answer this in the affirmative and develop tools to characterize an approximate performance of an $l-$letter technique via S-L expressions and enlarges upon the current known best. This, we believe, provides us a new attack strategy for multi-user problems that do not seem to admit optimal S-L coding techniques.